\newcommand{\hugo}{\color{black}}
\newcommand{\blaise}{\color{black}}
\newcommand{\arkiv}[1]{#1}
\newcommand{\notarkiv}[1]{}
\title{Controlling a Population}
\author[1]{Nathalie Bertrand}
\author[2]{Miheer Dewaskar}
\author[3]{Blaise Genest}
\author[4]{Hugo Gimbert}
\affil[1]{Inria, IRISA, Rennes, France}
\affil[2]{CMI, Chennai, India}
\affil[3]{CNRS, IRISA, Rennes, France}
\affil[4]{CNRS, LaBRI, Bordeaux, France}
\authorrunning{N. Bertrand, M. Dewaskar, B. Genest and H. Gimbert} 
\begin{document}

\maketitle
\begin{abstract}
  We introduce a new setting where a population of agents, each
  modelled by a finite-state system, are controlled uniformly: the
  controller applies the same action to every agent. The framework is
  largely inspired by the control of a biological system, namely a
  population of yeasts, where the controller may only change the
  environment common to all cells. We study a synchronisation problem
  for such populations: no matter how individual agents react to the
  actions of the controller, the controller aims at driving all agents
  synchronously to a target state. The agents are naturally
  represented by a non-deterministic finite state automaton (NFA), the
  same for every agent, and the whole system is encoded as a 2-player
  game. The first player (\playerone) chooses actions, and the second
  player (\playertwo) resolves non-determinism for each agent. The
  game with $m$ agents is called the $m$-population game. This gives
  rise to a parameterized control problem (where control refers to 2
  player games), namely the \emph{population control problem}: can
  \playerone\ control the $m$-population game for all
  $m \in \mathbb{N}$ whatever \playertwo\ does?

  In this paper, we prove that the population control
  problem is decidable, and it is a \EXPTIME-complete problem. 
  As far as we know, this is one of the first results on
  parameterized control. 
  Our algorithm, not based on cut-off techniques, produces winning strategies which are symbolic, that is, they do not need to count precisely how the population is spread between states. We also show that if there is no winning strategy, then there is a population size $\cutoff$ such that \playerone\ wins the
  $m$-population game if and only if $m\leq \cutoff$.  Surprisingly,
  $\cutoff$ can be doubly exponential in the number of states of the
  NFA, with tight upper and lower bounds.  
\end{abstract}

\section{Introduction}
Finite-state controllers, implemented by software, find applications in many different domains: telecommunication, planes, etc. There have been many theoretical studies from the model checking community to show that finite-state controllers are sufficient to control systems in idealised settings. Usually, the problem would be modeled as a game: some players model the controller, and some players model the system~\cite{ArnoldWalukiewicz}, the game settings (number of players, their power, their observation) depending on the context.

Lately, finite-state controllers have been used to control living organisms, such as a population of yeasts~\cite{Batt}. In this application, microscopy is used to monitor the fluorescence level of a population of yeasts, reflecting the concentration of some molecule, which differs from cell to cell. Finite-state systems can model a discretisation of the population of yeasts~\cite{Batt,AGKV16}.  The frequency and duration of injections of a sorbitol solution can be controlled, being injected uniformly into a solution in which the yeast population is immerged. However, the response of each cell to the osmotic stress induced by sorbitol varies, influencing the concentration of the fluorescent molecule.
The objective is to control the population to drive it through a  sequence of predetermined fluorescence states.

In this paper, we model this system of yeasts in an {\em idealised} setting: we require the (perfectly-informed) controller to surely lead synchronously all agents of a population to a state (one of the predetermined fluorescence states). Such a population control problem does not fit in traditional frameworks from the model checking community.  We thus introduce the \emph{$m$-population game}, where a population of $m$ identical agents is controlled uniformly. Each agent is modeled as a nondeterministic finite-state automaton (NFA), the same for each agent. The first player, called \playerone, applies the same action, a letter from the NFA alphabet, to every agent. Its opponent, called \playertwo, chooses the reaction of each individual agent. These reactions can be different due to non determinism.
The objective for \playerone\ is to gather all agents synchronously in the target state (which can be a sink state w.l.o.g.), and \playertwo\ seeks the opposite objective.
While this idealised setting may not be entirely satisfactory,
it constitutes a simple setting, as a first step towards more complex settings.

Dealing with large populations {\em explicitly} is in general intractable due to the state-space explosion problem. We thus consider the associated {\em symbolic parameterized control problem}, asking to reach the goal independently of the population size. 
We prove that this problem is decidable. While {\em parameterized verification} received recently quite some attention (see related work), our results are one of the first on {\em parameterized control}, as far as we know.


\medskip {\bf Our results.}  We first show that considering an infinite population is not equivalent to the parameterized control problem for all non zero integer $m$: there are cases where \playerone\ cannot control an infinite population but can control every finite population.  Solving the $\infty$-population game reduces to checking a reachability property on the support graph \cite{Martyugin-tocs14}, which can be easily done in \PSPACE.  On the other hand, solving the parameterized control problem requires new proof techniques, data structures and algorithms.

We easily obtain that when the answer to the population control problem is negative, there exists a population size $\cutoff$, called the \emph{cut-off}, such that \playerone\ wins the $m$-population game if and only if $m\leq \cutoff$. Surprisingly, we obtain a lower-bound on the cut-off doubly exponential in the number of states of the NFA. Following usual cut-off techniques would thus 
yield an inefficient algorithm of complexity at least 
2\EXPTIME.

To obtain better complexity, we developped new proof techniques ({\em not} based on cut-off techniques). Using them, we prove that the population control problem is \EXPTIME-complete. As a byproduct, we obtain a doubly exponential upper-bound for the cut-off, matching the lower-bound.
Our techniques are based on a reduction to a parity game with exponentially many states and a polynomial number of priorities.  The parity game gives some insight on the winning strategies of \playerone\ in the $m$-population games.  \playerone\ selects actions based on a set of {\em transfer graphs}, giving for each current state the set of states at time $i$ from which agent came from, for different values of $i$.  We show that it suffices for \playerone\ to remember at most a quadratic number of such transfer graphs, corresponding to a quadratic number of indices $i$. If \playerone\ wins this parity game then he can uniformly apply his winning strategy to all $m$-population games, just keeping track of these transfer graphs,
independently of the exact count in each state. If \playertwo\ wins the parity game then he also has a uniform winning strategy in $m$-population games, for $m$ large enough, which consists in splitting the agents evenly among all transitions of the transfer graphs.

\medskip {\bf Related work.}  Parameterized verification of systems with many identical components started with the seminal work of German and Sistla in the early nineties~\cite{GS-jacm92}, and received recently quite some attention.  The decidability and complexity of these problems typically depend on the communication means, and on whether the system contains a leader (following a different template) as exposed in the recent survey~\cite{Esparza-stacs14}.  This framework has been extended to timed automata templates~\cite{AJ-tcs03,ADRST-formats11} and probabilistic systems with Markov decision processes templates~\cite{BF-fsttcs13,BFS-fossacs14}.  Another line of work considers population protocols~\cite{AADFP-podc04,EGLM-concur15}. Close in spirit, are broadcast protocols~\cite{esparza-verif-99}, in which one action may move an arbitrary number of agents from one state to another. Our model can be modeled as a subclass of broadcast protocols, where broadcasts emissions are self loops at a unique state, and no other synchronisation allowed. The parameterized reachability question considered for broadcast protocols is trivial in our framework, while our parameterized control question would be undecidable for broadcast protocols.  In these different works, components interact directly, while in our work, the interaction is indirect via the common action of the controller. Further, the problems considered in related work are pure verification questions, and do not tackle the difficult issue of synthesising a controller for all instances of a parameterized system, which we do.

There are very few contributions pertaining to parameterized games
with more than one player.
The most related is \cite{Kouv}, which proves
decidability of control of mutual exclusion-like protocols
in the presence of an unbounded number of agents.
Another contribution in that domain is the one of broadcast networks of identical parity games~\cite{BFS-fossacs14}. However, the game is used to solve a verification (reachability) question rather than a parametrized control problem as in our case. Also the roles of the two players are quite different.

The winning condition we are considering is close to {\em synchronising words}. The original synchronising word problem asks for the existence of a word $w$ and a state $q$ of a {\em deterministic} finite state automaton, such that no matter the initial state $s$, reading $w$ from $s$ would lead to state $q$ (see \cite{Volkov-lata08} for a survey). Lately, synchronising words have been extended to NFAs~\cite{Martyugin-tocs14}. Compared to our settings, the author
assumes a possibly infinite population of agents, who could leak  arbitrarily often from a state to another. The setting is thus not parametrized, and a usual support arena suffices to obtain a \PSPACE algorithm. Synchronisation for probabilistic models~\cite{DMS12,DMS14}
have also been considered: the population of agents is not finite nor discrete, but rather continuous, represented as a distribution.
The distribution evolves deterministically with the choice of the controller (the probability mass is split according to the probabilities of the transitions), while in our setting, each agent can non deterministically move. This continuous model makes the parameterized verification question moot. In \cite{DMS12}, the controller needs to apply the same action whatever the state the agents are in (like our setting), and then the existence of a controller is undecidable. In \cite{DMS14}, the controller can choose the action depending on the state each agent is in (unlike our setting), and the existence of a controller reaching uniformly a set of states is \PSPACE-complete.

Last, our parameterized control problem can be encoded as a 2-player game on VASS~\cite{BJK-icalp10}, with one counter per state of the NFA: the opponent gets to choose the population size (encoded as a counter value), and for each action chosen by the controller, the opponent chooses how to move each agent (decrementing a counter and incrementing another). However, such a reduction yields a symmetrical game on VASS in which both players are allowed to modify the counter values, in order to check that the other player did not cheat. Symmetrical games on VASS are undecidable~\cite{BJK-icalp10}, and their asymmetric variant (in which only one player is allowed to change the counter values) are decidable in 2\EXPTIME~\cite{JLS-icalp15}, thus with higher complexity than our specific parameterized control problem.


\section{The population control problem}
\label{sec:setting}
\subsection{The $m$-population game}
A nondeterministic finite automaton (NFA for short) is a tuple
$\nfa=(\states, \Sigma, \state_0,\Delta)$ with $\states$ a finite set
of states, $\Sigma$ a finite alphabet, $\state_0 \in Q$ an initial state, and $\Delta \subseteq Q\times \Sigma \times Q$ the transition
relation.  
We assume throughout the paper that NFAs are complete, that
is, $\forall \state \in \states, a \in \Sigma \;, \exists p \in Q:
\;(\state,a,p) \in \Delta$. In the following, incomplete NFAs,
especially in figures, have to be understood as completed with a sink
state.

For every integer $m$, we consider a system $\nfa^{m}$ with $m$
identical agents $\nfa_1, \ldots, \nfa_{m}$ of the NFA $\nfa$.
The system $\nfa^{m}$ is itself an NFA $(\states^m, \Sigma,
\state^{m}_0,\Delta^m)$ defined as follows.  
Formally, states of
$\nfa^{m}$ are called configurations, and they are tuples $\vstate =
(\state_1,\ldots,\state_{m})\in Q^{m}$ describing the current state of each agent in the population.  We use the shorthand $\vstateinit[m]$,
or simply $\vstateinit$ when $m$ is clear from context, to denote the
initial configuration $(\stateinit,\ldots,\stateinit)$ of $\nfa^{m}$.
Given a target state $\targetstate \in Q$, the
$\targetstate$-synchronizing configuration is $\targetstate^{m} =
(\targetstate,\ldots,\targetstate)$ in which each agent is in the target state.

The intuitive semantics of $\nfa^m$ is that at each step, the same
action from $\Sigma$ applies to all agents. The effect of the action
however may not be uniform given the nondeterminism present in $\nfa$:
we have $((\state_1,\ldots,\state_m),a,(\state'_1,\ldots,\state'_m))
\in \Delta^m$ iff $(\state_j,\action,\state'_j)\in \Delta$ for all
$j \leq m$. A (finite or infinite) play in $\nfa^m$ is an alternating
sequence of configurations and actions, starting in the initial
configuration: $\play =\vstateinit \action_0 \vstate_1 \action_1
\cdots$ such that $(\vstate_i,\action_i,\vstate_{i+1}) \in
\Delta^m$ for all $i$. 

This is the \emph{$m$-population game} between \playerone\ and
\playertwo, where \playerone\ chooses the actions and \playertwo\
chooses how to resolve non-determinism. The objective for \playerone\
is to gather all agents synchronously in $\targetstate$ while
\playertwo\ seeks the opposite objective.

Our parameterized control problem asks whether \playerone\ can win the
$m$-population game for every $m \in \nats$. A strategy of \playerone\
in the $m$-population game is a function mapping finite plays to
actions, $\strat: (\states^m \times \actions)^* \times \states^m \to
\Sigma$.  A play $\play =\vstateinit \action_0 \vstate_1 \action_1
\vstate_2\cdots$ is said to \emph{respect $\sigma$}, or is a
\emph{play under $\strat$}, if it satisfies $\action_i = \strat(
\vstateinit \action_0 \vstate_1 \cdots \vstate_i)$ for all $i\in
\nats$. A play $\play =\vstateinit \action_0 \vstate_1 \action_1
\vstate_2\cdots$ is \emph{winning} if it hits the
$\targetstate$-synchronizing configuration, that is $\vstate_j =
\targetstate^m$ for some $j\in \nats$.  \playerone\ wins the
$m$-population game if he has a strategy such that all plays under
this strategy are winning. One can assume without loss of generality that $f$ is a sink state. If not, it suffices to add a new action leading tokens from $f$ to the new target sink state $\smiley$
and tokens from other states to a losing sink state $\frownie$.
The goal of this paper is to study the following parameterized
control problem:

\vspace{-.3cm}
 \begin{fmpage}{0.99\textwidth}
\textsf{Population control problem}\\
  {\bf Input}: An NFA $\nfa = (\states,\stateinit,\state_u,\actions,\trans)$
  and a target state $\targetstate \in \states$.\\
  {\bf Output}: Yes iff for every integer $m$ \playerone\ wins the $m$-population game. \end{fmpage}

	
		
		
		
		

\begin{figure}[t!]
\centering
\begin{tikzpicture}
\draw(-2,0) node [circle,draw,inner sep=2pt,minimum
size=12pt] (s1) {$\stateinit$};

\draw(0,1) node [circle,draw,inner sep=2pt,minimum size=12pt] (s2)
{$\state_1$};

\draw(0,-1) node [circle,draw,inner sep=2pt,minimum size=12pt] (s3)
{$\state_2$};

\draw(1.5,0) node [circle,draw,inner sep=2pt,minimum size=12pt] (s4) {$\targetstate$};

\draw [-latex'] (s1) -- (s2) node [pos=.5,below] {$\delta$};
\draw [-latex'] (s1) -- (s3) node [pos=.5,above] {$\delta$};

\draw [-latex']  (s2) .. controls +(60:30pt) and +(120:30pt) .. (s2)
node[midway,above]{$\delta$};
\draw [-latex']  (s3) .. controls +(240:30pt) and +(300:30pt) .. (s3)
node[midway,below]{$\delta$};

\draw [-latex'] (s2) .. controls +(160:1cm) and +(60:1cm)  .. (s1) node [pos=.5,above] {$b$};
\draw [-latex'] (s2) -- (s4) node [pos=.5,above] {\quad \, $a$};
\draw [-latex'] (s3) -- (s4) node [pos=.5,below] {\quad \, $b$};
\draw [-latex'] (s3) .. controls +(200:1cm) and +(300:1cm)  .. (s1) node [pos=.5,below] {$a$};

 \draw [-latex']  (s1) .. controls +(150:30pt) and +(210:30pt) .. (s1)
 node[midway,left]{$a,b$};
\draw [-latex']  (s4) .. controls +(30:30pt) and +(330:30pt) .. (s4)
node[midway,right]{$a,b,\delta$};
\end{tikzpicture}
	\caption{An exemple of NFA: The splitting gadget $\splitnfa$.}
		\label{fig:splitgadget}	
\end{figure}

For a fixed $m$, the winner of the $m$-population game can be
determined by solving the underlying reachability game with $|Q|^m$
states, which is intractable for large values of $m$.  
On the other hand, the answer to
the population control problem gives the winner of the $m$-population
game for arbitrary large values of $m$.  To obtain a decision
procedure for this 
parameterised problem, 
new data structures and algorithmic tools 
need to be developed, much more elaborate 
than the standard algorithm solving reachability games.

\begin{example}
\label{ex}
We illustrate the population control problem with the example
$\splitnfa$ on alphabet $\{a,b,\delta\}$ in Figure~\ref{fig:splitgadget}. 
Here, to represent
configurations we use a counting abstraction, and identify $\vstate$
with the vector $(n_0,n_1,n_2,n_3)$, where $n_0$ is the number of
agents in state $\state_0$, and so on. Under these notations, there is
a way to gather agents synchronously to $\targetstate$. We can give a
symbolic representation of a memoryless winning strategy $\strat$:
$\forall k_0,k_1 >0,\ \forall k_2,k_3 \geq 0,\ \strat(k_0,0,0,k_3) =
\delta,\ \strat(0,k_1,k_2,k_3) = a,\ \strat(0,0,k_2,k_3)=b$. Indeed,
the number of agents outside $\targetstate$ decreases by at least one
at every other step.
  The properties of this example will be detailed later and play a part in proving a lower bound (see Proposition~\ref{prop:cutoff-lowerbound}).
\end{example}

\subsection{Parameterized control and cut-off}
\label{sec:cutoff}
A first observation for the population control problem is that
$\vstateinit[m]$, $\targetstate^m$ and $Q^m$ are stable under a
permutation of coordinates.  A consequence is that the $m$-population
game is also symmetric under permutation, and thus the set of winning
configurations is symmetric and the winning strategy can be chosen
uniformly from symmetric winning configurations. Therefore, if
\playerone\ wins the $m$-population game then he has a positional
winning strategy which only counts the number of agents in each state
of $\nfa$ (the counting abstraction used in Example~\ref{ex}).


\begin{proposition}
\label{prop:monotony}
  Let $m \in \nats$. 
  If \playerone\ wins the $m$-population game, then he wins the $m'$-population game for every $m' \leq m$.
 \end{proposition}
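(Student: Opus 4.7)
The plan is to reduce winning the $m'$-population game to winning the $m$-population game by a padding argument. Given a winning strategy $\sigma$ for \playerone\ in the $m$-population game, I will construct a winning strategy $\sigma'$ in the $m'$-population game by introducing $m-m'$ \emph{phantom} agents that always mirror agent~$1$, and by delegating the choice of action to $\sigma$ evaluated on the enlarged shadow play.

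More concretely, I maintain the invariant that the shadow $m$-configuration $\tilde{\vstate} = (\state_1,\ldots,\state_m)$ satisfies $(\state_1,\ldots,\state_{m'}) = \vstate$, where $\vstate$ is the current $m'$-configuration, and $\state_{m'+1} = \cdots = \state_m = \state_1$. This holds initially, since both games start in the all-$\stateinit$ configuration. To define $\sigma'$ on a history $h' = \vstateinit[m'] a_0 \vstate_1 \cdots \vstate_k$ in the $m'$-game, I lift it step by step to a history $h = \vstateinit[m] a_0 \tilde{\vstate}_1 \cdots \tilde{\vstate}_k$ in the $m$-game respecting the invariant, and set $\sigma'(h') := \sigma(h)$. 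When \playertwo\ responds in the $m'$-game by choosing transitions $(\state_j, a, \state'_j) \in \Delta$ for $j \leq m'$, I extend this to a response of \playertwo\ in the $m$-game by assigning to every phantom agent $j > m'$ the same transition $(\state_1, a, \state'_1)$ as agent $1$; this is legal because, by the invariant, phantom $j$ is in state $\state_1$, and it preserves the invariant at the next step.

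By construction, every play $\play'$ of the $m'$-game respecting $\sigma'$ is the projection onto the first $m'$ coordinates of a play $\play$ of the $m$-game respecting $\sigma$. Since $\sigma$ is winning, $\play$ reaches $\targetstate^m$ in finitely many steps, and at that moment the first $m'$ coordinates are all $\targetstate$, so $\play'$ reaches $\targetstate^{m'}$. Hence $\sigma'$ is winning in the $m'$-population game. There is no real obstacle here; the only point requiring care is checking that the mirroring convention keeps the phantoms synchronised with agent $1$ so that \playertwo's moves in the $m'$-game always lift to legal \playertwo\ moves in the $m$-game, which the invariant guarantees.
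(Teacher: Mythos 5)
Your proposal is correct and follows essentially the same route as the paper's proof: the paper likewise pads the $m'$ agents with $m-m'$ extra agents that mimic the first agent, lifts each history of the smaller game to one of the larger game consistent with $\sigma$, and plays $\sigma$ there. Your write-up merely makes the invariant and the legality of the lifted \playertwo\ moves more explicit.
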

\arkiv
{
\begin{proof}
  Let $m \in \nats$, and assume $\strat$ is a winning strategy for \playerone\ in $\nfa^m$. For $m' \leq m$ we define $\strat'$ as a strategy on $\nfa^{m'}$, inductively on the length of finite plays. Initially, $\strat'$ chooses the same first action as $\strat$: $\strat'(\stateinit^{m'}) = \strat(\stateinit^m)$. We then arbitrarily choose that the missing $m-m'$ 
agents would behave similarly as the first agent. This is indeed a possible move for the adversary in $\nfa^m$. Then, for any finite play under $\strat'$ in $\nfa^{m'}$, say $\play' = \vstateinit^{m'} \action_0 \vstate_1^{m'} \action_1 \vstate_2^{m'}\cdots \vstate_n^{m'}$, there must exist an extension $\play$ of $\play'$ obtained by adding $m-m'$ 
agents, all behaving as the first agent in $\nfa^{m'}$, that is consistent with $\strat$. Then, we let $\strat'(\play') = \strat(\play)$. Obviously, since $\strat$ is winning in $\nfa^m$, $\strat'$ is also winning in $\nfa^{m'}$.
\end{proof}
}
 The idea to define $\strat_{m'}$ is to simulate the missing $m{-}m'$
 agents arbitrarily and apply $\strat_{m}$.

  \medskip

 Hence, when the answer to the population control problem is negative,
 there exists a \emph{cut-off}, that is a value $\cutoff \in \nats$
 such that for every $m < \cutoff$, \playerone\ has a winning strategy
 in $\nfa^m$, and for every $m \geq \cutoff$, he has no winning
 strategy.

\begin{example}
  To illustrate the notion of cut-off, consider the NFA 
  on alphabet $A \cup \{b\}$ from
  Figure~\ref{fig:linear_cutoff}. Unspecified transitions lead to a
  sink state $\frownie$.

  \medskip

  The cut-off is $\cutoff = |Q|-2$ in this case. Indeed, we have the following two directions:

\medskip

  On the one hand, for $m < \cutoff$, there is a winning strategy
$\strat_m$ in $\nfa^m$ to reach $\targetstate^m$, in just two
steps. It first plays $b$, and because $m<\cutoff$, in the next
configuration, there is at least one state $q_i$ such that no agent is
in $q_i$. It then suffices to play $a_i$ to win.

  \medskip

Now, if $m \geq \cutoff$, there is no winning strategy to synchronize
in $\targetstate$, since after the first $b$, agents can be spread so
that there is at least one agent in each state $q_i$. From there, 
\playerone\ can either play action $b$ and restart the whole game, or
play any action $a_i$, leading at least one agent to the  sink state $\frownie$.
\end{example}

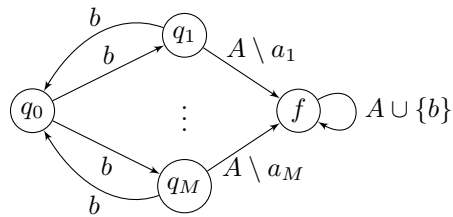
\begin{figure}[b!]
\centering
\begin{tikzpicture}
\draw(-2,0) node [circle,draw,inner sep=2pt,minimum
size=12pt] (s1) {$\stateinit$};

\draw(0,1) node [circle,draw,inner sep=2pt,minimum size=12pt] (s2)
{$\state_1$};
\draw(0,0) node [inner sep=2pt,minimum size=12pt] (s23) {$\vdots$};

\draw(0,-1) node [circle,draw,inner sep=2pt,minimum size=12pt] (s3)
{$\state_\cutoff$};

\draw(1.5,0) node [circle,draw,inner sep=2pt,minimum size=12pt] (s4) {$\targetstate$};

\draw [-latex'] (s1) -- (s2) node [pos=.5,above] {$b$};
\draw [-latex'] (s1) -- (s3) node [pos=.5,below] {$b$};

\draw [-latex'] (s2) .. controls +(160:1cm) and +(60:1cm)  .. (s1) node [pos=.5,above] {$b$};
\draw [-latex'] (s2) -- (s4) node [pos=.5,above] {\quad \, $A \setminus a_1$};
\draw [-latex'] (s3) -- (s4) node [pos=.5,below] {\quad \, $A \setminus a_\cutoff$};
\draw [-latex'] (s3) .. controls +(200:1cm) and +(300:1cm)  .. (s1) node [pos=.5,below] {$b$};

\draw [-latex']  (s4) .. controls +(30:30pt) and +(330:30pt) .. (s4)
node[midway,right]{$A \cup \{b\}$};
\end{tikzpicture}
	\caption{Illustration of the cut-off.}
		\label{fig:linear_cutoff}	
\end{figure}

\newpage

\subsection{Main results}

Our main result is 
the decidability 
of the population control problem:

\begin{theorem}
\label{th1}
  The population control problem is \EXPTIME-complete.
\end{theorem}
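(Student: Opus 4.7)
The plan is to establish the two directions separately: the \EXPTIME\ upper bound via a symbolic abstraction into a parity game of exponential size and polynomial index, and the \EXPTIME\ lower bound via reduction from a known \EXPTIME-hard game problem.

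For the upper bound, I would not try to reason directly about $\nfa^{m}$ for each $m$, since by Proposition~\ref{prop:monotony} the hard case is $m\to\infty$ and the cut-off can be doubly exponential. Instead I would design a symbolic arena whose states encode, at each round, a bounded amount of information about the \emph{distribution} of agents without counting them: namely the support together with a family of \emph{transfer graphs}. A transfer graph at depth $i$ is a bipartite relation between the states currently occupied and the states that were occupied $i$ steps ago, indicating from which past state at least one current agent could originate under \playertwo's choices. The first key step is to show that for the purposes of synchronising to $\targetstate$ it suffices to keep track of only $O(|Q|^2)$ such transfer graphs (the indices $i$ can be chosen to be the moments where the support or a specific cell in the graph strictly shrinks, which happens at most quadratically often before stabilising). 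I would turn this observation into a parity game $\mathcal{G}$ of size $2^{O(|Q|^2)}$ and with $O(|Q|^2)$ priorities, where \playerone\ picks letters of $\Sigma$ and \playertwo\ picks compatible updates of the transfer graphs; the parity condition forces that either the support collapses to $\{\targetstate\}$ or some transfer graph repeatedly certifies progress.

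The second, and main, step is the equivalence between \playerone\ winning the population control problem and \playerone\ winning $\mathcal{G}$. One direction (soundness) is to take a positional winning strategy in $\mathcal{G}$ and lift it to a symbolic strategy in $\nfa^{m}$ that only inspects the induced transfer graphs; an agent-tracking argument shows that under this strategy every concrete play eventually hits $\targetstate^{m}$, uniformly in $m$. The hard direction (completeness) is to show that if \playertwo\ wins $\mathcal{G}$, then for $m$ sufficiently large \playertwo\ wins $\nfa^{m}$. The idea is the \textbf{even-splitting} strategy: whenever \playerone\ plays an action, \playertwo\ distributes the agents in each source state as equally as possible across the outgoing transitions selected by his winning strategy in $\mathcal{G}$. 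Provided $m$ is at least doubly exponential in $|Q|$, each cell of every transfer graph remains populated by at least one agent throughout, so the concrete play faithfully tracks \playertwo's winning play in $\mathcal{G}$ and never reaches $\targetstate^{m}$. This is where most of the work lies, because one has to quantify how long \playertwo\ must survive, bound the required $m$, and verify that the even splitting really reproduces the abstract transfer-graph transitions. Solving $\mathcal{G}$ in \EXPTIME\ is then standard since it has exponentially many states and polynomially many priorities.

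For the lower bound, I would reduce from the acceptance problem of a linearly-bounded alternating Turing machine (equivalently \EXPTIME\ via \APSPACE${}={}$\EXPTIME), or equivalently from \texttt{PEEK-G} / a succinctly presented reachability game. Each configuration of the ATM is encoded as a population distribution over a gadget of polynomially many states, so that an exponential configuration space is simulated by \playertwo's freedom in spreading an unbounded population over a polynomial set of states. \playerone\ plays the role of the existential states of the ATM by choosing a letter, while \playertwo\ plays the universal states by splitting agents. The target state $\targetstate$ is reached synchronously iff the ATM accepts. The delicate point is to enforce faithful simulation: one adds consistency-checking actions which, if the distribution does not encode a legal successor configuration, let \playertwo\ strand an agent outside $\targetstate$ forever; symmetrically, \playerone\ needs a mechanism to punish illegal moves of \playertwo. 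This is the main obstacle of the hardness proof, and I would handle it with standard choice/check gadgets similar to those used in undecidability/hardness proofs for parametrised games, adapting them to the indirect control setting in which \playerone\ has no per-agent action.
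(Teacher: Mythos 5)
Your overall architecture --- a parity game over supports enriched with a quadratic-size list of transfer graphs, soundness by lifting a positional abstract strategy to a symbolic one, completeness by an even-splitting strategy for \playertwo\ with a doubly exponential population, and hardness from alternating polynomial-space Turing machines --- matches the paper's. But there is a genuine gap at the heart of the upper bound: you never pin down \emph{what} the parity condition must express, and ``some transfer graph repeatedly certifies progress'' is not it. The crucial point, which the paper isolates before building any parity game, is that the support abstraction is unsound on its own: there are plays of the support game (e.g.\ in the gadget of Figure~\ref{fig:splitgadget}) that are not the projection of any play of any finite $m$-population game, because realising them would require ever more agents; \playerone\ must be declared the winner on such \emph{unrealisable} plays, otherwise the abstract game is strictly harder for him than every concrete game and the equivalence fails (he wins every finite game on that example yet loses the plain support game). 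The paper characterises realisability as \emph{bounded capacity} of accumulators (Lemma~\ref{lem:noleak}), relaxes it to the $\omega$-regular condition \emph{finite capacity}, and shows (Lemmas~\ref{lemma.leaks} and~\ref{lem:caracPG}) that infinite capacity is equivalent to some tracked index being remanent and leaking infinitely often --- that is exactly what the odd priorities detect. Without this analysis your parity condition is unspecified and neither direction of the equivalence can be carried out.

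A second, related gap: your justification for keeping only $O(|Q|^2)$ transfer graphs --- ``the moments where the support or a cell strictly shrinks, which happens at most quadratically often before stabilising'' --- is incorrect, since supports and graphs can shrink and grow forever along a play. The paper's bound comes from a different mechanism: a graph is retained in the tracking list only if it \emph{separates} some pair of states not separated by an earlier graph in the list, which caps the list at $|Q|^2$ entries, and Lemma~\ref{lemma.separation}, relating separations to leaks, is what guarantees that discarding the other graphs loses no information about infinite capacity. Your completeness direction (even splitting, with the quantitative survival argument you correctly flag as the main work) and your hardness reduction are in the right spirit --- the paper indeed splits evenly and reduces from APSPACE --- though the hardness proof additionally needs a concrete store/restart gadget to neutralise \playertwo\ placing several agents in the choice state, together with an induction on the number of remaining agents, which your ``choice/check gadgets'' only gesture at.
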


When the answer to the population control problem is positive, there
exists a symbolic strategy $\sigma$, applicable to all instances
$\nfa^m$, that does not need to count the number of agents in each
state. This symbolic strategy requires exponential memory.
Otherwise, 
the cut-off is at most doubly exponential, which is asymptotically tight.

\begin{theorem}
\label{th2}
In case the answer to the population control problem is negative, the
cut-off is at most $\leq 2^{2^{O(|Q|^4)}}$. There is a family of
NFA $(\nfa_n)$ of size $O(n)$ and whose cut-off is $2^{2^{n}}$.
\end{theorem}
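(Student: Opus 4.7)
The plan is to extract the cut-off from the parity game that underlies Theorem~\ref{th1}. That game records a quadratic number of transfer graphs over $Q$, each being a subset of $Q\times Q$, so its state space has size $2^{O(|Q|^4)}$. When Player 1 loses, positional determinacy of parity games provides Player 2 with a memoryless winning strategy. I would then lift this strategy to the $m$-population game: the only obstacle to replaying it faithfully is that Player 2 must be able to perform the prescribed splits while keeping every state reached by the transfer graphs nonempty. A play induced by a positional strategy stabilises into a lasso of length at most $2^{O(|Q|^4)}$, and at each step the agents currently in a given state may need to be distributed among at most $|Q|$ transitions. Consequently $m \geq |Q|^{2^{O(|Q|^4)}} = 2^{2^{O(|Q|^4)}}$ agents always suffice, which matches the claimed cut-off bound.

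\textbf{Lower bound.} I would generalise the splitting gadget $\splitnfa$ of Example~\ref{ex}. Starting with $m$ agents in $\stateinit$, after the forced sequence $\delta$ followed by $a$ or $b$, Player 2 can always ensure that $\lceil m/2 \rceil$ tokens return to $\stateinit$; hence forcing $k$ full traversals of such a gadget requires $m \geq 2^k$ initial agents. To push $k$ up to $2^n$ with only $O(n)$ states, I would compose the splitter with an $n$-bit binary counter, encoded in $O(n)$ states in the usual way, and wire the transitions so that a counter increment is enabled only right after a complete traversal of the splitter. Reaching the top value of the counter then requires $2^n$ forced splitter traversals, hence at least $2^{2^n}$ initial agents; conversely, once $m \geq 2^{2^n}$, Player 1 has enough tokens to absorb all the doublings and drive the counter to its terminal value, sending every agent to $\targetstate$.

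\textbf{Main obstacle.} The substantial work lies in the lower-bound construction: the NFA $\mathcal{A}_n$ must ensure simultaneously that (i) Player 1 cannot bypass the counter or skip any of its $2^n$ values, (ii) every increment is gated by a full traversal of the splitter, and (iii) the total state count remains linear in $n$. The defensive strategy of Player 2 is conceptually simple, namely always split evenly and punish any shortcut attempt by rerouting agents to a losing sink, but turning this into a clean transition structure that provably yields the exact cut-off $2^{2^n}$ requires care. The upper bound, by contrast, should follow relatively directly once the parameters of the parity game of Theorem~\ref{th1} have been set up.
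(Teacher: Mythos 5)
Your plan --- lift \playertwo's finite-memory winning strategy from the capacity/parity game to the $m$-population game by splitting the agents evenly along the prescribed transfer graphs --- is the paper's plan, and your final figure $|Q|^{2^{O(|Q|^4)}}$ matches. But the step that makes it work is missing, and the justification you substitute for it is false. A play consistent with a positional strategy of \playertwo\ does \emph{not} stabilise into a lasso: \playerone\ still chooses the letters, so the play can be aperiodic; and even on a genuine lasso, splitting the agents of each state among up to $|Q|$ successors at \emph{every} step of the loop, forever, would dilute any finite population below one agent per state, so no finite $m$ would suffice. What must be proved is that along the whole infinite play, the number of steps at which the set of agents feeding a given future state actually gets diluted is bounded. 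The paper does this by taking the accumulator $T_j=Q\setminus U_j$ obtained from backward closures of a target state and showing that between two indices at which the memory state of $\tau$, the support, and the accumulator component all coincide there can be no entry into the accumulator --- otherwise pumping that segment would yield a play with infinitely many entries, i.e.\ infinite capacity, contradicting that $\tau$ wins the capacity game. This caps the number of dilution steps at $|\MemSet|\cdot 4^{|Q|}$ and gives the bound $|Q|^{1+|\MemSet|\cdot 4^{|Q|}}$ with $|\MemSet|=2^{|Q|^4}$. Without some version of this pumping argument your upper bound does not go through.

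\textbf{Lower bound.} Your ingredients (the splitter $\splitnfa$ plus an $n$-bit counter, synchronised so the counter ticks once per splitter round) are exactly the paper's, but the logic of your ``conversely'' direction is inverted. By Proposition~\ref{prop:monotony}, \playerone\ wins for every $m$ below the cut-off and loses above it, so the construction must make a \emph{large} population bad for \playerone. In the paper, the counter is a clock that \playertwo\ arms with $n$ of his own agents and that forces $\frownie$ after $2^n$ rounds; since clearing the splitter with $m$ agents takes about $\log_2 m$ rounds, \playertwo\ wins exactly when $m$ is large enough both to fill the counter and to survive $2^n$ splitter rounds, i.e.\ $m\geq 2^{2^n+1}+n$. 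Your sentence ``once $m\geq 2^{2^n}$, \playerone\ has enough tokens to \dots\ send every agent to $\targetstate$'' has \playerone\ winning for large $m$, which would contradict monotonicity and would exhibit no cut-off at all. You also leave unresolved how the agents are apportioned between the two gadgets (the paper makes them a disjoint union reached by a nondeterministic initial action, the $+n$ term accounting for the agents \playertwo\ must sacrifice to arm the counter), and you explicitly defer this as the ``main obstacle'' rather than closing it.
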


\section{The \ror\ game}
\label{sec:capacity-game}
The objective of this section is to show that the population control
problem is equivalent to solving a game called the \emph{\ror\ game}.
To introduce useful notations, we first recall the population game
with infinitely many agents, as studied in \cite{Martyugin-tocs14} (see also \cite{Shi14} p.81).

\subsection{The $\infty$-population game}
\label{sec:resolution}
\label{subsec:game-support}
To study the $\infty$-population game, the behaviour of infinitely
many agents is abstracted into \emph{supports} which keep track of the
set of states in which at least one agent is. We thus introduce the
\emph{support game}, which relies on the notion of \emph{transfer
  graphs}. Formally, a transfer graph is a subset of $Q\times Q$
describing how agents are moved during one step.  The domain of a
transfer graph $G$ is $\dom(G) = \{ \state \in Q \mid \exists
(\state,r) \in G\}$ and its image is $\im(G) = \{r \in Q \mid \exists
(\state,r) \in G\}$.  Given an NFA $\nfa =(\states, \Sigma,
\state_0,\Delta)$ and $a \in \Sigma$, the transfer graph $G$ is
compatible with $a$ if for every edge $(q,r)$ of $G$, $(q,a,r) \in
\Delta$. We write $\GG$ for the set of transfer graphs.

The \emph{support game} of an NFA $\nfa$ is a two-player reachability
game played by \playerone\ and \playertwo\ on the \emph{support arena}
as follows. States are supports, \emph{i.e.}, non-empty subsets of
$\states$ and the play starts in $\{\state_0\}$. 
The goal support is $\{\targetstate\}$.
From a support $S$,
first \playerone\ chooses a letter $a \in \Sigma$, then \playertwo\
chooses a transfer graph $G$ compatible with $a$ and such that
$\dom(G)=S$, and the next support is $\im(G)$.  A play in the support
arena is described by the sequence $\rho = S_0 \arr{a_1,G_1} S_1
\arr{a_2,G_2}\ldots$ of supports and actions (letters and transfer
graphs) of the players.
Here, \playertwo\ best strategy is to play the maximal graph possible (this is not the case with discrete populations), and we obtain a \PSPACE-complete algorithm~\cite{Martyugin-tocs14}:
\begin{proposition}
\label{prop:support-implies-suresync}
\playerone\ wins the $\infty$-population game 
iff he wins the support game.
\end{proposition}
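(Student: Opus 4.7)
The plan is to match plays in the two games through the support abstraction: any configuration $\vec q$ in the $\infty$-population game has a support $\mathrm{supp}(\vec q) = \{q \in \states : \text{some agent is at } q\}$ that forms a node of the support arena, and any one-step evolution of the population $\vec q \to \vec q'$ induces a transfer graph (take the edges $(q,q')$ actually traversed) whose domain is $\mathrm{supp}(\vec q)$, whose image is $\mathrm{supp}(\vec q')$, and which is compatible with the letter played. I would prove each direction by lifting a strategy from one game to the other along this correspondence.

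For the easy direction, suppose $\sigma$ is a winning strategy for \playerone\ in the support game. Define $\hat\sigma$ in the $\infty$-population game to play the letter that $\sigma$ prescribes on the sequence of supports seen so far. Whatever \playertwo\ does concretely, the induced transfer graph is a legal move in the support arena, so the projected play is a play under $\sigma$ and therefore reaches $\{\targetstate\}$ in finitely many steps. At that moment every agent sits in $\targetstate$, so $\hat\sigma$ wins.

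For the converse I would argue by contrapositive. Assuming \playertwo\ wins the support game with a strategy $\tau$, build a \playertwo\ strategy $\hat\tau$ in the $\infty$-population game while maintaining the invariant that every state in the current support contains infinitely many agents (true initially at $\{\stateinit\}$). When \playerone\ plays $a$ from a configuration of support $S$, let $G = \tau(\cdots, S, a)$. Since $G$ is compatible with $a$, each $q \in S$ has a finite nonempty set of successors under $G$; split the infinitely many agents at $q$ among them, sending infinitely many to each. The new support is exactly $\mathrm{im}(G)$ and the invariant is preserved. The resulting sequence of supports is a play under $\tau$, which by assumption never reaches $\{\targetstate\}$, so no configuration equals $\targetstate^\infty$ and $\hat\tau$ defeats every strategy of \playerone.

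The only subtle point is the converse direction: \playertwo\ must be able to realize \emph{exactly} the transfer graph chosen by $\tau$, and this requires enough agents in each state of the current support, which is why the infinite-occupancy invariant must be carried along the play. The infinite-population setting makes this trivial, whereas in the discrete case such a realization need not be possible; this is precisely why the authors remark that the maximal graph argument fails for discrete populations.
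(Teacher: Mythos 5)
Your argument is correct. Note that the paper does not actually prove this proposition---it cites \cite{Martyugin-tocs14} and only remarks that \playertwo's best strategy is to play the maximal compatible transfer graph---so your proof is a self-contained reconstruction rather than a variant of a printed argument. Your route is the natural one and is consistent with the paper's hint: the infinite-occupancy invariant (every state of the current support holds infinitely many agents) is exactly what makes \emph{every} compatible transfer graph with full domain, in particular the maximal one, realisable by \playertwo\ in the $\infty$-population game, which is precisely the feature that fails for finite populations and forces the rest of the paper's machinery. Both directions are sound: the projection of a concrete step does yield a transfer graph compatible with the letter and with domain equal to the current support, so the lifting of a \playerone\ support-game strategy works; and in the converse direction the invariant is preserved because $\mathrm{dom}(G)=S$ guarantees every occupied state has a nonempty finite successor set among which countably many agents can be split into countably infinite groups. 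The one step you should make explicit is the passage from ``\playerone\ does not win the support game'' to ``\playertwo\ has a winning strategy $\tau$'': this requires determinacy of the support game, which holds because it is a reachability game on a finite arena. With that remark added, the proof is complete.
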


{\hugo 
As a consequence, the winner of the $\infty$-population game  can be computed in \EXPTIME.
However, this is of no use for deciding the population control problem,
because \playerone\ might win every $m$-population game (with
$m <\infty$) and at the same time lose
the $\infty$-population game.
This is demonstrated by the example from Figure~\ref{fig:splitgadget}.
As already shown, \playerone\ wins any $m$-population game with
$m <\infty$. However, \playertwo\ can win the $\infty$-population game
by splitting agents from $q_0$ to both
$q_1$ and $q_2$ each time \playerone\ plays $\delta$.
This way, the sequence of supports is
$\{q_0\} \{q_1,q_2\} (\{q_0,f\} \{q_1,q_2,f\})^*$, which never hits
$\{f\}$.}

\subsection{Realisable plays}

  
  {\hugo
  Plays of the $m$-population game (for $m
<\infty$) can be abstracted as plays in the support game,
by forgetting the identity of agents and keeping only track 
of edges that are used by at least one agent.
Formally, 
given a play 
 $\play =\vstateinit \action_0 \vstate_1 \action_1
\vstate_2\cdots$
of the $m$-population game,
define for every integer $n$, $S_n = \{\state \in Q \mid
\exists 1\leq i\leq m, \vstate_m[i]=\state\}$
and $G_{n+1}=\{ (s,t) \mid \exists 1\leq i\leq m, \vstate_n[i]=s \land \vstate_{n+1}[i]=t\}$.
Then  $S_0 \arr{\action_1,G_1} S_1
\arr{\action_2,G_2}\ldots$ is a play in the support arena, denoted $\Phi_m(\play)$
and called the projection of $\play$.

Not every play in the support arena can be obtained by projection, as 
in the example from Figure~\ref{fig:splitgadget} where
some plays of the support game use infinitely often the edge from 
$q_1$ to $f$, however in any $m$-population game each of the $m$ agents
might use this edge at most once.
We distinguish between these two types of plays using the notion of realisable plays.
}

\begin{definition}[Realisable plays]
A play of the support game is \emph{realisable} if there exists $m
<\infty$ such that it is the projection by $\Phi_m$ of a play
in the $m$-population game.
\end{definition}

{\hugo
A key observation is that realisability can be characterized in terms of capacity.
}

\begin{definition}[Plays with finite and bounded capacity]
  Let $\rho = S_0 \arr{a_1,G_1} S_1 \arr{a_2,G_2}\ldots$ be a play in
  the support arena.  
  
  An \emph{accumulator} of $\rho$ is a sequence
  $T=(T_j)_{j\in \NN}$ such that for every integer $j$, $T_j \subseteq
  S_j$, and which is \emph{successor-closed} \emph{i.e.}, for every $j
  \in \nats$,
{\hugo$
(s \in T_j \land  (s,t)\in G_{j+1} )\implies t \in T_{j+1}\enspace.
$}
For every $j \in \nats$, an edge $(s,t)\in G_{j+1}$ is an \emph{entry} to
$T$ if $s\not \in T_j$ and $t\in T_{j+1}$.

{\hugo
A play has \emph{finite capacity} if all its accumulators have finitely many entries,
\emph{infinite capacity} otherwise, and  \emph{bounded capacity} if
the number of entries of its accumulators is bounded.
}
\end{definition}
{\hugo
Bounded capacity is actually equivalent to realisability.
\begin{lemma}\label{lem:noleak}
A play is realisable iff it has bounded capacity.
\end{lemma}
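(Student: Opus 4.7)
The plan is to prove the two directions separately. The forward direction (realisable implies bounded capacity) is elementary; the backward direction is the main work and proceeds via a Menger-style edge-cover theorem combined with a compactness argument.

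For the forward direction, fix an $m$-population play $\play$ with $\Phi_m(\play) = \rho$ and any accumulator $T = (T_j)_j$. Successor-closure forces agents to stay trapped in $T$ once they enter: if an agent lies in $T_j$ at time $j$, it follows an edge of $G_{j+1}$ whose target is in $T_{j+1}$, hence it remains in $T$ forever after. Conversely, each entry edge $(s,t) \in G_{j+1}$ is, by definition of $G_{j+1}$, traversed by at least one agent, and that agent enters $T$ at step $j{+}1$. Since an agent enters $T$ at most once, the number of entries of $T$ is bounded by $m$.

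For the backward direction, suppose every accumulator has at most $B$ entries. I view $\rho$ as an infinite DAG $D$ rooted at $(\state_0, 0)$, with vertex set $\bigsqcup_{j \geq 0} S_j \times \{j\}$ and an edge from $(s,j)$ to $(t, j{+}1)$ whenever $(s,t) \in G_{j+1}$. A realisation of $\rho$ by $m$ agents corresponds to a family of $m$ infinite root-paths in $D$ whose union of edges is exactly $E(D)$. The key auxiliary ingredient is a Menger-type theorem for finite rooted DAGs: the minimum number of root-paths needed to cover every edge equals the maximum, over successor-closed vertex subsets not containing the root, of the number of edges entering that subset. I would prove this as a self-contained lemma, either by reduction to a minimum-cost circulation with edge lower bounds followed by LP duality, or by induction on the edge count.

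Using this, fix $k \in \nats$ and let $D_k$ be the truncation of $D$ at depth $k$. Any successor-closed subset $T$ of $V(D_k)$ extends to an accumulator of $\rho$ by taking $T_{j+1} = \im(G_{j+1} \cap (T_j \times S_{j+1}))$ for $j \geq k$; the extension retains all entries of $T$ in $D_k$, so the bounded-capacity hypothesis bounds by $B$ the number of entries in any successor-closed subset of $D_k$. By the auxiliary theorem, $D_k$ admits an edge-cover by at most $B$ root-paths, and by duplicating paths we may assume exactly $B$. The (finite, non-empty) set $\mathcal{C}_k$ of such $B$-tuples carries a well-defined truncation map $\mathcal{C}_{k+1} \to \mathcal{C}_k$, so a standard König-style compactness argument on the associated finitely-branching tree extracts $B$ infinite root-paths in $D$ whose edges cover all of $E(D)$. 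The induced $B$-population play projects onto $\rho$, witnessing realisability. The main obstacle is the Menger-type edge-cover theorem, which is classical in spirit but not a familiar off-the-shelf statement; the remaining steps are routine compactness and bookkeeping.
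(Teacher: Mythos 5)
Your forward direction coincides with the paper's: each entry edge is witnessed by a distinct agent that enters the accumulator and, by successor-closure, never leaves, so the number of entries is at most $m$. For the converse you take a genuinely different route. The paper argues constructively: it puts $|Q|^{m+1}$ agents on every state of $S_0$ (where $m$ bounds the capacity) and splits them evenly along all edges of the transfer graph at every step; a backward-reachability argument --- the complements of the ancestor sets of any fixed vertex form an accumulator, hence suffer at most $m$ entries, hence the agent count is divided by at most $|Q|^{m}$ overall --- shows every support state always keeps at least $|Q|$ agents, so every edge is traversed and the projection is exactly $\rho$. You instead recast realisability as covering the edges of the layered DAG by root-paths, apply min-flow/max-cut duality with unit lower bounds on finite truncations (your ``Menger-type'' lemma is exactly the classical min-flow max-cut theorem with lower bounds, plus integrality and path decomposition in a DAG), and glue by K\"onig's lemma. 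Both are correct. The paper's proof is elementary and self-contained but yields a population exponential in the capacity bound; yours needs the min-max theorem and a compactness step but yields the optimal population, namely the capacity bound $B$ itself, so the minimal realising population equals the capacity. Two details to pin down: state the covering lemma for \emph{maximal} root-paths (root to last layer), which is what agent trajectories are --- harmless here since $\dom(G_{j+1})=S_j$ guarantees every vertex has a successor --- and note that feasibility of the flow uses that every vertex is reachable from the root, which holds because $S_j=\im(G_j)$ and $\dom(G_j)=S_{j-1}$.
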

}

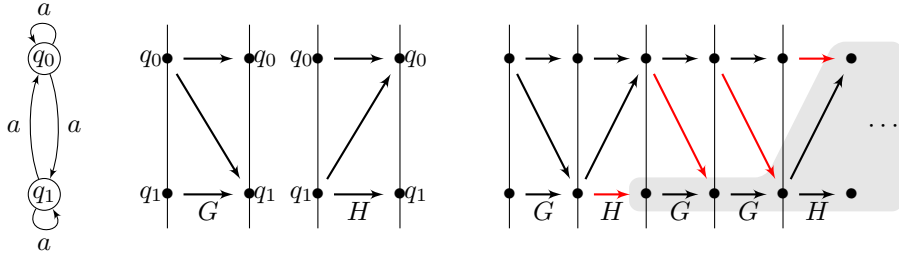
\begin{figure}
\begin{center}
\begin{tikzpicture}[scale=0.9]
\draw (-1.8,0) node [circle,draw,inner sep=0pt,minimum size=12pt ]
  (s0) {$\state_0$} ;

  \draw(-1.8,-2) node [circle,draw,inner sep=0pt,minimum size=12pt ]
  (s1) {$\state_1$} ;

 \draw [-latex'] (s1) .. controls +(110:20pt) and +(250:20pt)  .. (s0)
 node [pos=.5,left] {$a$};
 \draw [-latex'] (s0) .. controls +(290:20pt) and +(70:20pt)  .. (s1)
 node [pos=.5,right] {$a$};

\draw [-latex']  (s0) .. controls +(60:20pt) and +(120:20pt) .. (s0)
node[midway,above]{$a$};

\draw [-latex']  (s1) .. controls +(240:20pt) and +(300:20pt) .. (s1)
node[midway,below]{$a$};


\draw (0,.5) -- (0,-2.5);
\draw (1.2,.5) -- (1.2,-2.5);

\draw (0,0) node (q0Gs) {$\bullet$};
\node  at (q0Gs.180) {$\state_0$}; 
\draw (0,-2) node (q1Gs) {$\bullet$};
\node  at (q1Gs.180) {$\state_1$}; 
\draw(1.2,0) node (q0Gt) {$\bullet$};
\node  at (q0Gt.0) {$\state_0$}; 
\draw(1.2,-2) node (q1Gt) {$\bullet$};
\node  at (q1Gt.0) {$\state_1$}; 

\draw[-latex',thick] (q0Gs) -- (q0Gt);
\draw[-latex',thick] (q0Gs) -- (q1Gt);
\draw[-latex',thick] (q1Gs) -- (q1Gt);

\node at (.6, -2.25) (G) {$G$};


\draw (2.2,.5) -- (2.2,-2.5);
\draw (3.4,.5) -- (3.4,-2.5);

\draw (2.2,0) node (q0Hs) {$\bullet$};
\node  at (q0Hs.180) {$\state_0$}; 
\draw (2.2,-2) node (q1Hs) {$\bullet$};
\node  at (q1Hs.180) {$\state_1$}; 
\draw(3.4,0) node (q0Ht) {$\bullet$};
\node  at (q0Ht.0) {$\state_0$}; 
\draw(3.4,-2) node (q1Ht) {$\bullet$};
\node  at (q1Ht.0) {$\state_1$};

\draw[-latex',thick] (q1Hs) -- (q0Ht);
\draw[-latex',thick] (q1Hs) -- (q1Ht);
\draw[-latex',thick] (q0Hs) -- (q0Ht);

\node at (2.8, -2.25) (H) {$H$};


\fill [fill=black!10,rounded corners] (6.75,-2.25) -- (10.75,-2.25)  -- (10.75,.25) --
(9.75,.25) -- (8.75,-1.75) -- (6.75,-1.75) -- cycle;

\draw (5,.5) -- (5,-2.5);
\draw (6,.5) -- (6,-2.5);

\node at (5.5,-2.25) (G1) {$G$};

\draw (5,0) node (q0Gs) {$\bullet$};
\draw (5,-2) node (q1Gs) {$\bullet$};
\draw(6,0) node (q0Gt) {$\bullet$};
\draw(6,-2) node (q1Gt) {$\bullet$};

\draw[-latex',thick] (q0Gs) -- (q0Gt);
\draw[-latex',thick] (q0Gs) -- (q1Gt);
\draw[-latex',thick] (q1Gs) -- (q1Gt);

\draw (7,.5) -- (7,-2.5);

\draw (7,0) node (q0Hs) {$\bullet$};
\draw (7,-2) node (q1Hs) {$\bullet$};

\draw[-latex',thick] (q1Gt) -- (q0Hs);
\draw[-latex',thick,red] (q1Gt) -- (q1Hs);
\draw[-latex',thick] (q0Gt) -- (q0Hs);

\node at (6.5,-2.25) (H1) {$H$};
\draw (8,.5) -- (8,-2.5);
\draw (9,.5) -- (9,-2.5);
\draw (8,0) node (q0Gs) {$\bullet$};
\draw (8,-2) node (q1Gs) {$\bullet$};
\draw(9,0) node (q0Gi) {$\bullet$};
\draw(9,-2) node (q1Gi) {$\bullet$};

\draw[-latex',thick] (q0Hs) -- (q0Gs);
\draw[-latex',thick,red] (q0Hs) -- (q1Gs);
\draw[-latex',thick] (q1Hs) -- (q1Gs);
\draw[-latex',thick] (q0Gs) -- (q0Gi);
\draw[-latex',thick,red] (q0Gs) -- (q1Gi);
\draw[-latex',thick] (q1Gs) -- (q1Gi);

\node at (7.5,-2.25) (G2) {$G$};
\node at (8.5,-2.25) (G3) {$G$};

\draw(10,0) node (q0Hs) {$\bullet$};
\draw(10,-2) node (q1Hs) {$\bullet$};

\draw[-latex',thick,red] (q0Gi) -- (q0Hs);
\draw[-latex',thick] (q1Gi) -- (q0Hs);
\draw[-latex',thick] (q1Gi) -- (q1Hs);

\node at (9.5,-2.25) (G4) {$H$};

\node at (10.5,-1) (dots) {$\cdots$};
\end{tikzpicture}
\end{center}
\caption{An NFA, two transfer graphs, and a play with finite yet
  unbounded capacity.}
\label{fig:capacity}
\end{figure}

An example is given on Figure~\ref{fig:capacity} which represents an NFA, two transfer graphs $G$
and $H$, and a play $G H G^2 H G^3 \cdots$.  Obviously, this play is not realisable because at least $n$
agents are needed to realise $n$ transfer graphs $G$ in a row: at each
$G$ step, at least one agent moves from $q_0$ to $q_1$, and no new
agent enters $q_0$.  A simple analysis shows that there are only two
kinds of non-trivial accumulators $(T_j)_{j \in \nats}$ depending on
whether their first non-empty $T_j$ is $\{\state_0\}$ or
$\{\state_1\}$. We call these top and bottom accumulators,
respectively. All accumulators have finitely many entries, thus the
play has finite capacity.  However, for every $n \in \nats$ there is a
bottom accumulator with $2 n$ entries. As an example, a bottom
accumulator with $4$ entries (in red) is depicted on the figure.  Therefore, the capacity of this play is not bounded.

\arkiv{
\begin{proof}[Proof of Lemma~\ref{lem:noleak}]
  Let $\rho = S_0\arr{a_1,G_1} S_1 \arr{a_2,G_2}\cdots$ be a
  realisable play in the support arena and
  $\play=\vstate_0\vstate_1\vstate_2 \cdots$ a play in the
  $m$-population game for some $m$, such that $\Phi_m(\play) =
  \rho$. For any accumulator $T=(T_j)_{j\in \NN}$ accumulator of
  $\rho$, let us show that $T$ has less than $m$ entries.  For every
  $j\in\NN$, we define $n_j=\mid \{ 1 \leq k \leq m \mid \vstate_j(k)
  \in T_j\}\mid $ as the number of agents in the accumulator at index
  $j$.  By definition of the projection, every edge $(s,t)$ in $G_j$
  corresponds to the move of at least one agent from state $s$ in
  $\vstate_j$ to state $t$ in $\vstate_{j{+}1}$.  Thus, since the
  accumulator is successor-closed, the sequence $(n_j)_{j\in\NN}$ is
  non-decreasing and it increases at each index $j$ where the
  accumulator has an entry. The number of entries is thus bounded by
  $m$ the number of agents.

  Conversely, assume that a play $\rho = S_0\arr{a_1,G_1} S_1
  \arr{a_2,G_2}\cdots$ has bounded capacity, and let $m$ be an upper
  bound on the number of entries of its accumulators. Let us show that
  $\rho$ is the projection of a play
  $\play=\vstate_0\vstate_1\vstate_2 \cdots$ in the
  $(|S_0||Q|^{m+1})$-population game. In the initial configuration
  $\vstate_0$, every state in $S_0$ contains $|Q|^{m+1}$ agents.
  Then, configuration $\vstate_{n+1}$ is obtained from $\vstate_{n}$
  by splitting evenly the agents among all edges of $G_{n+1}$. As a
  consequence, for every edge $(s,t)\in G_{n+1}$ at least a fraction
  $\frac{1}{|Q|}$ of the agents in state $s$ in $\vstate_{n}$ moves to
  state $t$ in $\vstate_{n+1}$. By induction,
  $\play=\vstate_0\vstate_1\vstate_2 \cdots$ projects to some play
  $\rho' = S'_0\arr{a_1,G'_1} S'_1 \arr{a_2,G'_2}\cdots$ such that for
  every $n\in \NN$, $S'_n\subseteq S_n$ and $G'_n\subseteq G_n$. To
  prove that $\rho'=\rho$, we show that for every $n\in \NN$ and state
  $t\in S_n$, at least $|Q|$ agents are in state $t$ in $\vstate_{n}$.
  For that let $(U_j)_{j\in 0\ldots n}$ be the sequence of subsets of
  $Q$ defined by $U_n= \{t\}$, and for $0 < j < n$,
  \[
U_{j-1} = \{ s \in Q \mid \exists t' \in U_j, (s,t')\in G_j\}\enspace.
  \]
  Let $(T_j)_{j\in \NN}$ be the sequence of subsets of states defined
  by $T_j=Q\setminus U_j$ if $j\leq n$ and $T_j=Q$ otherwise.  Then
  $(T_j)_{j\in \NN}$ is an accumulator: if $s\not \in U_j$ and
  $(s,s')\in G_j$ then $s'\not \in U_{j{+}1}$.  As a consequence,
  $(T_j)_{j\in \NN}$ has at most $m$ entries, thus there are at most
  $m$ indices $j\in \{0\ldots n-1\}$ such that some agents in the
  states of $U_{j}$ in configuration $\vstate_{j}$ may move to states
  outside of $U_{j{+}1}$ in configuration $\vstate_{j{+}1}$. In other
  words, if we denote $M_j$ the number of agents in the states of
  $U_{j}$ in configuration $\vstate_{j}$ then there are at most $m$
  indices where the sequence $(M_j)_{j\in 0\ldots n}$ decreases.  By
  definition of $\play$, even when $M_j > M_{j{+}1}$ at least a fraction
  $\frac{1}{|Q|}$ of the agents moves from $U_{j}$ to $U_{j{+}1}$ along
  the edges of $G_{j{+}1}$, thus $M_{j{+}1} \geq
  \frac{M_j}{|Q|}$. Finally, the number of agents $M_n$ in state $t$
  in $\vstate_n$ satisfies $M_n \geq \frac{|S_0||Q|^{m+1}}{|Q|^m} \geq
  |Q|$.  Hence $\rho$ and $\rho'$ coincide, so that $\rho$ is
  realisable.
\end{proof}
}


{\hugo

\subsection{The \ror\ game}
An obvious hint to obtain a game on the support arena equivalent with
the population control problem is to make the winning condition
tougher for \playertwo, letting him lose whenever the play is not
realisable, \emph{i.e.} whenever the play has unbounded capacity.
{\hugo 
However, bounded capacity is not a regular property for runs, so that
it is not convenient to use it as a winning condition.
On the contrary, \emph{finite capacity} is a regular property,
and the corresponding 
abstraction of the population game, called the \ror\ game,
can be used to decide the population control problem.
}

\begin{definition}[\Ror\ game]
  The \emph{\ror\ game} is the game played on the support arena, where
  \playerone\ wins a play iff either the play reaches
  $\{\targetstate\}$ or the play has infinite capacity.
  A player \emph{wins the \ror\ game} if he has a winning strategy in this game.
\end{definition}

\begin{theorem}
The answer to the population control problem is positive iff
\playerone\ wins the \ror\ game, which is decidable.
\end{theorem}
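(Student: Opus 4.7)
My plan is to split the equivalence into two directions and to handle decidability through a reduction to a parity game of exponential size.

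For the direction ``\playerone\ wins the \ror\ game $\Rightarrow$ positive population control answer'', I will take any winning strategy $\sigma$ for \playerone\ in the \ror\ game and, for each $m$, define $\sigma_m$ in the $m$-population game by invoking $\sigma$ on the projected support-game history $\Phi_m(\play)$ of the current configuration history. Any play consistent with $\sigma_m$ projects to a play consistent with $\sigma$, which by Lemma~\ref{lem:noleak} has bounded capacity at most $m$, hence finite capacity. Since $\sigma$ wins the \ror\ game, the projected play must therefore reach $\{\targetstate\}$ at some finite time, and at that instant every agent sits in $\targetstate$, so $\sigma_m$ wins the $m$-population game.

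For decidability, I will reduce the \ror\ game to a parity game whose states are pairs $(S,T)$ with $T \subseteq S \subseteq \states$, where $S$ tracks the current support and $T$ tracks a candidate accumulator. In each round \playerone\ picks an action $a$, \playertwo\ picks a compatible transfer graph $G$ with $\dom(G)=S$, and then \playerone\ picks a successor-closed extension $T' \subseteq \im(G)$ containing $G(T)$. \playerone's winning condition is a B\"uchi condition: either reach the absorbing support $\{\targetstate\}$, or use infinitely often an entry edge $(s,t) \in G$ with $s \notin T$ and $t \in T'$. The equivalence with the \ror\ game leverages the finiteness of the accumulator state space: whenever a play has infinite capacity, \playerone\ can reconstruct online an accumulator witnessing it, since local successor-closed choices of $T'$ suffice to track any given accumulator. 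Since this parity game has $2^{O(|\states|)}$ states and a small number of priorities, it is solved in \EXPTIME.

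For the converse direction of the equivalence I will argue by contrapositive using positional determinacy of parity games. If \playerone\ does not win the \ror\ game, \playertwo\ wins the parity game with a positional strategy $\tau^{\ast}$, which lifts to a finite-memory strategy $\tau$ on the support arena. An SCC analysis of the finite product of the support arena with the memory of $\tau^{\ast}$ yields a uniform bound $M = 2^{O(|\states|)}$ on the number of entries of any accumulator under $\tau$: a recurring entry inside a bottom strongly connected component of this finite product would yield an accumulator with infinitely many entries, contradicting $\tau^{\ast}$'s winning the parity game; so all entries happen in a bounded transient phase. With this uniform bound, the constructive direction of Lemma~\ref{lem:noleak} turns $\tau$ into a genuine \playertwo\ strategy in the $m$-population game for some $m$ polynomial in $M$ and $|\states|$ that avoids $\targetstate^m$, defeating \playerone. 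The hardest step is exactly this uniform bound, because ``finite capacity'' is strictly weaker than ``bounded capacity'' (as witnessed by Figure~\ref{fig:capacity}); the finite memory of $\tau^{\ast}$ inherited from the parity game is what upgrades finite to bounded capacity through the SCC argument.
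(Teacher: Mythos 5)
Your first direction (a winning strategy $\sigma$ for \playerone\ in the \ror\ game yields, via $\sigma_m=\sigma\circ\Phi_m$ and Lemma~\ref{lem:noleak}, a winning strategy in every $m$-population game) is correct and is exactly the paper's argument. The gap is in your parity-game reduction, and it propagates to your converse direction. You let \playerone\ resolve \emph{online} the nondeterminism of the accumulator-guessing automaton, keeping a single set $T$ that can only grow (you require $T'\supseteq G(T)$), and you assert that ``whenever a play has infinite capacity, \playerone\ can reconstruct online an accumulator witnessing it.'' That is the unjustified step: local successor-closed updates do suffice to follow an accumulator \emph{fixed in advance}, but infinite capacity is an existential statement over accumulators, and which one witnesses it depends on the whole future of the play --- a future \playertwo\ controls and can make depend on the $T$ you have irrevocably committed to. Already in the two-state example of Figure~\ref{fig:capacity} this fails: let \playertwo\ play $G$ as long as $q_1\notin T$ and switch to $H$ forever as soon as $q_1\in T$. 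To ever register an entry under $G$, \playerone\ must put $q_1$ into $T$; the next $H$ then forces $T$ to become the full support, after which no entry is ever registered again --- yet the resulting play $G^kH^\omega$ has infinite capacity, witnessed by the accumulator constantly equal to $\{q_0\}$, which \playerone\ can no longer track since $T$ cannot shrink. So your game is sound but not complete for \playerone: losing it does not imply losing the \ror\ game. This is precisely why the paper does not let a player resolve the B\"uchi nondeterminism: Proposition~\ref{th.2wins} determinises the monitor with Safra's construction (memory $2^{2^{|Q|}}$, hence only 2\EXPTIME), and Section~4 replaces Safra by the tracking list of up to $|\states|^2$ concatenated transfer graphs with the leak/separation lemmas to reach \EXPTIME. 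A single adversary-visible, monotone set $T$ of size $2^{O(|\states|)}$ cannot do this job.

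The converse direction inherits the flaw. A positional winning strategy $\tau^{\ast}$ for \playertwo\ in \emph{your} parity game only guarantees finitely many entries into the one set that \playerone\ happened to track; it does not force finite capacity of the play. Consequently your SCC argument cannot bound the entries of an \emph{arbitrary} accumulator: the accumulator is not a component of $\tau^{\ast}$'s memory, so a recurring entry inside a bottom SCC of the product with that memory contradicts nothing. The paper's version of this step quantifies over all accumulators $T$ and pumps on repeated triples (memory state of $\tau$, support $S_i$, accumulator component $T_i$), using that the pumped periodic play is still consistent with $\tau$ and therefore has finite capacity; this bounds the number of entries by $|\MemSet|\cdot 4^{|\stnb|}$ and then requires a population of size $|\stnb|^{1+|\MemSet|\cdot 4^{|\stnb|}}$ --- exponential, not polynomial, in the entry bound, which is a further (minor) inaccuracy in your estimate. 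Your overall architecture (finite parity game, determinacy, transfer back to finite populations via Lemma~\ref{lem:noleak}) is the right one, but the parity game must be driven by a deterministic monitor for infinite capacity rather than by an online guess of \playerone.
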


This theorem is a direct corollary of the following proposition:

\begin{proposition}
\label{th.2wins}
Either \playerone\  
or \playertwo\ wins the capacity game.
The winner has a winning strategy with
\emph{ finite memory $\MemSet$}
of size $\mathcal{O}\left(2^{2^{|Q|}}\right)$.
In case \playerone\ is the winner of the \ror\ game,
he wins all $m$-population games,
for every integer $m$.
In case \playertwo\ is the winner of the \ror\ game,
he  wins the ${|\stnb|}^{ 1+ |\MemSet|
  \cdot 4^{|\stnb|}}$-population game.
\end{proposition}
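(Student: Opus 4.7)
The plan is to recast the \ror\ game as an $\omega$-regular game on the support arena, reduce it to a finite parity game of doubly-exponential size, and then transfer winning strategies between the \ror\ game and the $m$-population games.

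For determinacy and the memory bound, \playerone's winning condition---``reach $\{\targetstate\}$ or the play has infinite capacity''---is $\omega$-regular. The infinite-capacity disjunct asks whether some accumulator $(T_j)$ has infinitely many entries, which I would detect by a non-deterministic Büchi automaton $\mathcal{B}$ whose state is a subset $T \subseteq Q$ (interpreted as the current $T_j$), which updates $T$ non-deterministically along the support/transfer-graph transitions enforcing successor-closure, and accepts on entry transitions. $\mathcal{B}$ has $2^{|Q|}$ states, and its Büchi condition admits an equivalent deterministic parity automaton of size $\mathcal{O}(2^{2^{|Q|}})$ by tracking a latest-appearance record over the powerset $2^{Q}$ of still-active accumulator candidates. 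The synchronised product with the support arena yields a parity game of size $\mathcal{O}(2^{2^{|Q|}})$, on which positional determinacy applies; this gives both determinacy of the \ror\ game and finite-memory winning strategies of size $|\MemSet| = \mathcal{O}(2^{2^{|Q|}})$ for the winner.

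When \playerone\ wins the \ror\ game with a strategy $\sigma$, I would define in any $m$-population game the strategy $\tilde{\sigma}$ by querying $\sigma$ along the projected play $\Phi_m$. Any play under $\tilde{\sigma}$ projects to a \ror\ play consistent with $\sigma$, which must either reach $\{\targetstate\}$ or have infinite capacity. The latter is ruled out by Lemma~\ref{lem:noleak}: projections of $m$-population plays are realisable, hence have bounded capacity. Therefore the projection reaches $\{\targetstate\}$, so the configuration is $\targetstate^m$ and $\tilde{\sigma}$ wins.

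When \playertwo\ wins with a finite-memory strategy $\tau$, consider the monitored arena whose states are triples (support, $\tau$-memory state, candidate accumulator subset), of size at most $N := |\MemSet| \cdot 4^{|Q|}$, and set $m := |Q|^{1+N}$. In the $m$-population game \playertwo\ simulates $\tau$ by splitting the agents of each state evenly along the out-edges of the prescribed transfer graph. Starting with $m$ agents at $\stateinit$, the minimum count per used state decreases by at most a factor $|Q|$ per step, so it stays above $|Q|$ for at least $N$ steps, ensuring the simulation faithfully realises the transfer graphs during that window. Within these $N$ steps the monitored play must revisit a triple and enter a loop; in this loop no accumulator can undergo entries, otherwise iterating the loop would produce infinitely many entries contradicting \playertwo's winning. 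This absence of entries structurally forces the transfer graphs in the loop to restrict to bijections between the surviving subsets of the support, so the even-splitting redistribution preserves the minimum agent count and the simulation sustains indefinitely without ever reaching $\targetstate^m$. The main obstacle is precisely this structural step: translating ``no accumulator has entries in the loop'' into the dynamic statement that the splitting strategy preserves agent counts, which requires a characterisation of the biclosed accumulator/transfer-graph configurations forced by absence of entries and a careful design of the monitored arena so that every potential entry witness is visible in the product. A secondary difficulty is tightening the NBA-to-DPA construction to fit within $\mathcal{O}(2^{2^{|Q|}})$ memory, by exploiting the combinatorial shape of the entry condition rather than invoking Safra as a black box.
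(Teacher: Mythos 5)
Your first two parts coincide with the paper's proof: the nondeterministic B\"uchi automaton of size $2^{|Q|}$ guessing an accumulator and accepting on entries, determinisation to a parity automaton with state space $\MemSet$, positional determinacy of the product parity game, and the transfer of a winning strategy $\sigma$ of \playerone\ to $\sigma\circ\Phi_m$ via Lemma~\ref{lem:noleak}. These are correct (modulo the same looseness as the paper in the Safra bound, which you rightly flag as a point to tighten).

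The third part has a genuine gap, and it sits exactly where you say ``the main obstacle'' is. Two problems. First, the structural claim that absence of entries into an accumulator along a repeated segment ``forces the transfer graphs to restrict to bijections'' is false: no entry into $T$ only forbids edges from $Q\setminus T$ into $T$; the graphs may still split and merge freely inside $Q\setminus T$ and inside $T$, so per-state agent counts are \emph{not} preserved by even splitting, and your minimum-count invariant does not survive the loop. Second, the play under $\tau_m$ need not become periodic, so ``enter a loop and sustain indefinitely'' is not a valid reduction; moreover the relevant accumulator cannot be tracked online in a monitored arena, because it is determined by the \emph{future} of the play. The paper's resolution is an offline, per-target argument: fix a time $n$ and a state $r\in S_n$, pull back the predecessor sets $U_n=\{r\}$, $U_{j-1}=\{s\mid \exists t\in U_j,\ (s,t)\in G_j\}$, and let $T_j=Q\setminus U_j$, which is an accumulator. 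The quantity to control is the \emph{total} number $a_j$ of agents in $U_j$, not a per-state minimum: at an index with no entry into $T$, every agent of $U_j$ is sent into $U_{j+1}$, so $a_{j+1}=a_j$; at an entry index, even splitting still keeps a fraction $1/|Q|$ of them (every state of $U_j$ has an out-edge into $U_{j+1}$ by construction), so $a_{j+1}\ge a_j/|Q|$. The pumping argument on triples (memory state of $\tau$, support $S_i$, accumulator component $T_i$) shows any two entry indices carry distinct triples, hence there are at most $|\MemSet|\cdot 4^{|Q|}$ of them, giving $a_n\ge m/|Q|^{|\MemSet|\cdot 4^{|Q|}}\ge |Q|$ and establishing well-definedness of $\tau_m$ by induction for all time. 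You should replace your loop-and-bijection step by this backward-cone counting; the rest of your scaffolding (even splitting, the choice of $m$, the pigeonhole on triples) then goes through.
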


\notarkiv{\begin{proof}[Proof of first and second assertions]}
\arkiv{\begin{proof}}
We start with the first assertion.
Whether a play has {infinite} capacity
can be verified by a non-deterministic B\"uchi automaton of size
$2^{|Q|}$ on the alphabet of transfer graphs, which guesses an
accumulator on the fly and checks that it has infinitely many entries.
This B\"uchi automaton can be determinised into a parity automaton
(\emph{e.g.}\ using Safra's construction) with state space $\MemSet$
of size $\mathcal{O}\left(2^{2^{|Q|}}\right)$.  The synchronized
product of this deterministic parity automaton with the support game
produces a parity game which is equivalent with the capacity game, in
the sense that, up to unambigous synchronization with the
deterministic automaton, plays and strategies in both games are the
same and the synchronization preserves winning plays and strategies.
Since parity games are determined and positional~\cite{zielonka},
either \playerone\ or \playertwo\ has a positional winning strategy in
the parity game, thus either \playerone\ or \playertwo\ has a winning
strategy with finite memory $\MemSet$ in the capacity game.
Let us prove the second assertion.
Assuming that \playerone\ wins the \ror\ game with a strategy $\sigma$,
he can win any
$m$-population game, $m <\infty$, with the strategy $\sigma_m=\sigma\circ \Phi_m$.
By definition,
the projection of every infinite play of the $m$-population game
is realisable thus has bounded capacity (Lemma~\ref{lem:noleak}).
This holds in particular for every play consistent with $\sigma_m$.
Since the projection of such a play is consistent with $\sigma$,
and since $\sigma$ wins the \ror\ game
then any  play under $\sigma_m$ reaches $\{\targetstate\}$.
%
\arkiv{
%
%

\medskip

We now prove the third and last assertion:
 if \playertwo\ has a winning strategy \emph{with finite memory
    $\MemSet$} in the \ror\ game, he has a winning strategy in the
  ${|\stnb|}^{ 1+ |\MemSet| \cdot 4^{|\stnb|}}$-population game.
  Let $\tau$ be a winning strategy for \playertwo\ in the \ror\ game
  with finite-memory $\MemSet$. We define $m={|\stnb|}^{1 +
    |\MemSet| \cdot 4^{|\stnb|}}$ and consider the $m$-population
  game.

  A winning strategy $\tau_m$ for \playertwo\ in the $m$-population
  game can be designed using $\tau$ as follows.  When it is
  \playertwo's turn to play in the $m$-population game, the play so
  far $\play=\vstate_0\arr{a_1}\vstate_1 \cdots \vstate_n
  \arr{a_{n+1}}$ is projected via $\Phi_m$ to a play $\rho =
  S_0\arr{a_1,G_1} S_1 \cdots S_n \arr{a_{n+1}}$ in the capacity game.
  Let $G_{n+1}=\tau(\rho)$ be the decision of \playertwo\ at this
  point in the capacity game.  Then, to determine $\vstate_{n+1}$,
  $\tau_m$ splits evenly the agents in $\vstate_n$ along every edge of
  $G_{n+1}$.  This guarantees that for every edge $(q,r)\in G_{n+1}$,
  at least a fraction $\frac{1}{|\stnb|}$ of the agents in state $q$
  in $\vstate_n$ moves to state $r$ in $\vstate_{n+1}$. Assuming that
  $\tau_m$ is properly defined, then it is winning for
  \playertwo. Indeed, $\tau$ guarantees that $\{f\}$ is never reached
  in the capacity game, thus $\tau_m$ guarantees that not all agents
  are simultaneously in the target state $f$.

  Now, strategy $\tau_m$ is properly defined as long as the projection
  $\rho$ is consistent with $\tau$, which in turns holds as long as at
  least one agent actually moves along every edge of $G_{n+1}$.  To
  establish that $\tau_m$ is well-defined, it is enough to show that:
\begin{itemize}
\item[($\dagger$)] for every $n\in\NN$ and every state $r \in S_n$, at
  least $|Q|$ agents are in state $r$ in $\vstate_n$ \enspace.
\end{itemize}
To show $(\dagger)$, we consider $\rho=S_0\arr{a_1,G_1} S_1
\arr{a_2,G_2}\ldots S_n$ the projection in the support arena of a play
$\play=\vstate_0\arr{a_1}\vstate_1\arr{a_2} \ldots \vstate_n$
consistent with $\tau_m$.  Let $(U_j)_{j\in 0\ldots n}$ be the
sequence of subsets of $Q$ defined by $U_n= \{r\}$, and for $0 < j <
n$,
  \[
U_{j-1} = \{ s \in Q \mid \exists t \in U_j, (s,t)\in G_j\}\enspace.
  \]
  Let $T = (T_j)_{j\in \NN}$ be the sequence of complement subsets:
  $T_j=Q\setminus U_j$ if $j\leq n$ and $T_j=Q$ otherwise.  Then, $T$
  is an accumulator: if $s\not \in U_j$ and $(s,s')\in G_j$ then
  $s'\not \in U_{j{+}1}$.
  
  Assume that there are two integers $0\leq i < j \leq n$
  such that at step $i$ and $j$
  \begin{itemize}
\item the memory state of $\tau$ coincide: $\mathsf{m}_i=\mathsf{m}_j$;
\item the supports coincide: $S_i=S_j$; and
\item the supports in the accumulator $T$ coincide: $T_i=T_j$.
\end{itemize}
Then we show that there is no entry in the accumulator between indices
$i$ and $j$.  The play $\play_*$ identical to $\play$ up to date $i$
and which repeats ad infinitum the subplay of $\play$ between dates
$i$ and $j$, is consistent with $\tau$, because
$\mathsf{m}_i=\mathsf{m}_j$ and $S_i=S_j$.  The corresponding sequence
of transfer graphs is $G_0,\ldots , G_{i-1} (G_i, \ldots, G_{j-1})
^\omega$ and $T_0,\ldots ,T_{i-1}(T_i\ldots T_{j-1})^\omega$ is a
"periodic" accumulator of $\play_*$.  By periodicity, this accumulator
has either no entry, or infinitely many entries after date $i-1$.
Since $\tau$ is winning, $\play_*$ has finite capacity, thus the
periodic accumulator has no entry after date $i-1$, and there is no
entry in the accumulator $(T_j)_{j\in \NN}$ between indices $i$ and
$j$.

Let $I$ be the set of indices where there is an entry in the
accumulator $(T_j)_{j\in \NN}$.  According to the above, for all pairs
of distinct indices $(i,j)$ in $I$, we have $m_i\neq m_j\lor S_i\neq
S_j \lor V_i\neq V_j$.  As a consequence,
\[
 |I| \leq |\MemSet|\cdot 4^{|\stnb|}\enspace.
\]
Denote $a_i$ the number of agents in $U_i$ at date $i$.  If $i\not \in
I$, \emph{i.e.}\ if there is no entry to $T_i$ at date $i$ then all
agents in $U_i$ at date $i$ are in $U_{i+1}$ at date $i+1$ hence
$a_{i+1}=a_i$. In the other case, when $i\in I$, strategy $\tau_m$
sends at least a fraction $\frac{1}{|\stnb|}$ of the agents from $U_i$
to $U_{i+1}$ thus $a_{i+1}\geq \frac{a_i}{|\stnb|}$.  Finally
\[
a_n\geq \frac{m}{|Q|^{|I|}} \geq m \cdot |\stnb|^{-|\MemSet|\cdot 4^{|\stnb|}}
=
{|\stnb|}^{1 + |\MemSet| \cdot 4^{|\stnb|}}\cdot |\stnb|^{-|\MemSet|\cdot 4^{|\stnb|}}
=|Q|\enspace.
\]
Since $U_n=\{r\}$ then property ($\dagger$) holds.  As a consequence
$\tau_m$ is well-defined and, as already discussed, $\tau_m$ is a
winning strategy for \playertwo\ in the $m$-population game.
}
\end{proof}

As consequence of Proposition~\ref{th.2wins}, the population control
problem can be decided by explicitely computing the parity game and
solving it, thus in 2\EXPTIME.
This complexity bound can  actually be improved to \EXPTIME, as shown in the next section.

  \begin{figure}
\centering
\begin{tikzpicture}[scale=0.7]
\draw (-3.5,.75) node [circle,draw,inner sep=2pt,minimum size=12pt ]
  (s0) {$\state_0$} ;
\draw (3.5,.75) node [circle,draw,inner sep=2pt,minimum size=12pt ]
  (sf) {$\targetstate$} ;

  \draw(0,3) node [circle,draw,inner sep=2pt,minimum size=12pt ]
  (s1) {$\state_1$} ;

\draw(0,1.5) node [circle,draw,inner sep=2pt,minimum size=12pt] (s2) {$\state_2$}
;

\draw(0,0) node [circle,draw,inner sep=2pt,minimum size=12pt] (s3) {$\state_3$};

\draw(0,-1.5) node [circle,draw,inner sep=2pt,minimum size=12pt] (s4){$\state_4$};

 \draw [-latex'] (s2) .. controls +(110:20pt) and +(250:20pt)  .. (s1)
 node [pos=.5,left] {$a$};
 \draw [-latex'] (s1) .. controls +(290:20pt) and +(70:20pt)  .. (s2)
 node [pos=.5,right] {$a$};

 \draw [-latex'] (s4) .. controls +(110:20pt) and +(250:20pt)  .. (s3)
 node [pos=.5,left] {$a$};
 \draw [-latex'] (s3) .. controls +(290:20pt) and +(70:20pt)  .. (s4)
 node [pos=.5,right] {$a$};

 \draw [-latex'] (s4) .. controls +(150:30pt) and +(210:30pt)  .. (s3)
 node [pos=.5,left] {$b$};
 \draw [-latex'] (s3) .. controls +(330:30pt) and +(30:30pt)  .. (s4)
 node [pos=.5,right] {$b$};

 \draw [-latex'] (s3) .. controls +(150:30pt) and +(210:30pt)  .. (s2)
 node [pos=.5,left] {$b$};
 \draw [-latex'] (s2) .. controls +(330:30pt) and +(30:30pt)  .. (s3)
 node [pos=.5,right] {$b$};

\draw [-latex']  (s1) .. controls +(60:30pt) and +(120:30pt) .. (s1)
node[midway,above]{$b$};

\draw [-latex'] (s0) .. controls +(60:2cm) and +(180:2cm)  .. (s1)
 node [pos=.5,above] {$c$};
\draw [-latex'] (s0) .. controls +(30:2cm) and +(180:2cm)  .. (s2)
 node [pos=.5,above] {$c$};
\draw [-latex'] (s0) .. controls +(330:2cm) and +(180:2cm)  .. (s3)
 node [pos=.5,below] {$c$};
\draw [-latex'] (s0) .. controls +(300:2cm) and +(180:2cm)  .. (s4)
 node [pos=.5,below] {$c$};

\draw [-latex'] (s1) .. controls +(0:2cm) and +(120:2cm)  .. (sf)
 node [pos=.5,above right] {$c$};
\draw [-latex'] (s3) .. controls +(0:2cm) and +(210:2cm)  .. (sf)
 node [pos=.5,below] {$c$};
\draw [-latex'] (s4) .. controls +(0:2cm) and +(240:2cm)  .. (sf)
 node [pos=.5,below right] {$c$};
\end{tikzpicture}
\caption{Population game where \playerone\ needs memory to win the associated \ror\ game.}
\label{fig:leak}
\end{figure}
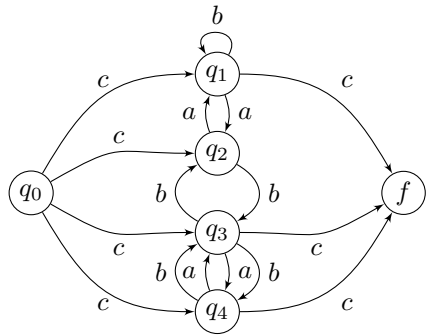

\medskip

We conclude with an example showing  that, in general, positional strategies are not sufficient to win the \ror\ game.}
%
Consider the example of Figure~\ref{fig:leak},
where the only way for \playerone\ to win is to reach a support
without $\state_2$ and play $c$.
With a memoryless strategy,
\playerone\ cannot win the capacity game.
There are only two memoryless strategies from support
$S=\{\state_1,\state_2,\state_3,\state_4\}$.
If \playerone\ only plays $a$ from $S$, the support remains
$S$ and the play has bounded capacity. 
If he only plays $b$'s from $S$, then \playertwo\ can split tokens from
$\state_3$ to both $\state_2,\state_4$ and the play remains in
support $S$, with bounded capacity. In both cases, the play 
has finite capacity
and \playerone\ loses. 

However, \playerone\ can win the capacity game. His (finite-memory)
winning strategy $\strat$ consists in first playing $c$, and then
playing alternatively $a$ and $b$, until the support does not contain
$\{\state_2\}$, in which case he plays $c$ to win.  Two consecutive
steps $ab$ send $\state_2$ to $\state_1$, $\state_1$ to $\state_3$,
$\state_3$ to $\state_3$, and $\state_4$ to either $\state_4$ or
$\state_2$.  To prevent \playerone\ from playing $c$ and win,
\playertwo\ needs to spread from $\state_4$ to both $\state_4$ and
$\state_2$ every time $ab$ is played. Consider the accumulator $T$
defined by $T_{2i}=\{\state_1,\state_2,\state_3\}$ and
$T_{2i-1}=\{\state_1,\state_2,\state_4\}$ for every $i >0$. It has an
infinite number of entries (from $\state_4$ to $T_{2i}$).  Hence
\playerone\ wins if this play is executed.  Else, \playertwo\
eventually keeps all agents from $\state_4$ in $\state_4$ when $ab$ is
played, implying the next support does not contain
$\state_2$. Strategy $\strat$ is thus a winning strategy for
\playerone.

\section{Solving the \ror\ game in \EXPTIME}
\label{sec:parity}
To solve efficiently the \ror\ game, we build an equivalent
exponential size parity game with a polynomial number of parities.  To
do so, we enrich the support arena with a \emph{tracking list}
responsible of checking whether the play has finite capacity.  The
tracking list is a list of transfer graphs, 
which are used to detect certain patterns called \emph{leaks}.

\subsection{Leaking graphs}
In order to detect whether a play $\rho = S_0 \arr{a_1,G_1} S_1
\arr{a_2,G_2}\ldots$ has finite capacity, it is enough to detect
\emph{leaking} graphs (characterising entries of accumulators). Further, leaking graphs have special
\emph{separation} properties which will allow us to 
track a small number of graphs. For $G,H$ two graphs, we denote $(a,b) \in G \cdot H$ iff there exists $z$ with $(a,z) \in G,$ and $(z,b) \in H$.

\begin{definition}[Leaks and separations]
  Let $G,H$ be two transfer graphs.  We say that $G$ \emph{leaks at
    $H$} if there exist states $q,x,y$ with $(q,y) \in G \cdot H$,
  $(x,y) \in H$ and $(q,x) \notin G$.  We say that $G$
  \emph{separates} a pair of states $(r,t)$ if there exists $q \in
  \states$ with $(q,r)\in G$ and $(q,t)\not\in G$.
\end{definition}
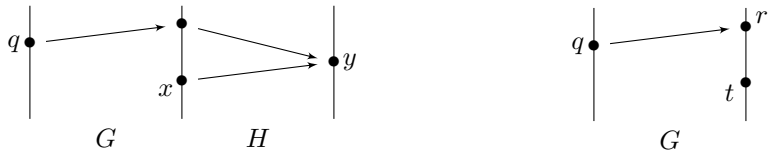
\begin{figure}[htbp]
\begin{subfigure}{.5\textwidth}
\begin{center}
\begin{tikzpicture}
\draw[thin] (0,.5) -- (0,-1);
\draw (2,.5) -- (2,-1);
\draw (4,.5) -- (4,-1);
\draw (0,0) node (q) {$\bullet$};
\node  at (q.180) {$q$}; 
\draw(2,-.5) node (x) {$\bullet$};
\node  at (x.210) {$x$}; 
\draw(2,.25) node (x') {$\bullet$};
\draw(4,-.25) node (y) {$\bullet$};
\node  at (y.0) {$y$}; 
\draw[-latex'] (q) -- (x');
\draw[-latex'] (x') -- (y);
\draw[-latex'] (x) -- (y);

\node at (1, -1.25) (G) {$G$};
\node at (3, -1.25) (H) {$H$};

\end{tikzpicture}
\end{center}
\end{subfigure}
\begin{subfigure}{.4\textwidth}
\begin{center}
\begin{tikzpicture}
\draw[thin] (0,.5) -- (0,-1);
\draw (2,.5) -- (2,-1);
\draw (0,0) node (q) {$\bullet$};
\node  at (q.180) {$q$}; 
\draw(2,-.5) node (x) {$\bullet$};
\node  at (x.210) {$t$}; 
\draw(2,.25) node (x') {$\bullet$};
\node  at (x'.30)  {$r$}; 
\draw[-latex'] (q) -- (x');
\node at (1, -1.25) (G) {$G$};
\end{tikzpicture}
\end{center}
\end{subfigure}
\caption{Left: $G$ leaks at $H$; Right: $G$ separates $(r,t)$.}
\end{figure}

The tracking list will be composed of concatenated graphs {\em tracking i} of the form $G[i,j]= G_{i+1} \cdots G_j$ relating $S_i$ with $S_j$:
$(s_i,s_j) \in G[i,j]$ if there exists $(s_k)_{i < k < j}$
with $(s_k,s_{k+1}) \in G_{k+1}$ for all $i \leq k\leq j$. 
Infinite capacity relates to leaks in the following way:
\begin{lemma}
\label{lemma.leaks}
A play has infinite capacity iff there exists an index $i$ 
such that $G[i,j]$ leaks at $G_{j+1}$ for infinitely many indices $j$.
\end{lemma}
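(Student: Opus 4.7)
I would prove the two directions separately, first observing the key reformulation: for any $q\in S_i$ and $i\in\nats$, the forward orbit defined by $U^{q,i}_k=\emptyset$ for $k<i$ and $U^{q,i}_k=\{t\in Q\mid (q,t)\in G[i,k]\}$ for $k\geq i$ is itself an accumulator of the play (it is automatically successor-closed and contained in $S_k$). Moreover, an edge $(x,y)\in G_{j+1}$ is an entry of $U^{q,i}$ at time $j+1$ exactly when $(q,y)\in G[i,j+1]$ and $(q,x)\notin G[i,j]$, i.e.\ exactly when $(q,x,y)$ is a leak witness at index $i$.

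For the easy direction, suppose $G[i,j]$ leaks at $G_{j+1}$ for infinitely many $j$, with witnesses $(q_j,x_j,y_j)$. Since $S_i$ is finite, one value $q\in S_i$ appears as $q_j$ for infinitely many $j$; the orbit $U^{q,i}$ is an accumulator with infinitely many entries, so the play has infinite capacity.

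For the hard direction, let $T$ be an accumulator with entries $(x_n,y_n)\in G_{k_n}$ and infinitely many indices~$n$. I would first reduce to the minimal accumulator $\tilde T\subseteq T$ containing $T_0$ and all endpoints $y_n$; entries of $T$ remain entries of $\tilde T$ because $x_n\notin T\Rightarrow x_n\notin\tilde T$, and in $\tilde T$ every element of $\tilde T_k$ lies in the forward orbit of some seed. Apply Ramsey for the finite monoid of graph compositions to the sequence of composed graphs indexed by entry times: this yields an infinite subsequence of entry times $t_1<t_2<\cdots$ along which $G[t_i,t_j]=E$ for all $i<j$, with $E$ idempotent; by further pigeonhole, assume $y_n=y^\star$ is constant on this subsequence. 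Idempotency forces every seed $\sigma$ active at time $t_1$ to have orbit $T^{\sigma}_{t_j}=E[T^{\sigma}_{t_1}]$ for all $j\geq 2$, i.e.\ a constant set. Now pick any state $a\in S_{t_1}\cap E^{-1}(y^\star)$, which is non-empty since $y^\star\in S_{t_2}=E(S_{t_1})$; whenever $a$ lies in the $t_1$-orbit of a seed $\sigma$ of $\tilde T$, the orbit $T^\sigma$ is contained in $\tilde T\subseteq T$ and contains $y^\star$ at every $t_j$ with $j\geq 2$, so each edge $(x_{t_j},y^\star)$ is automatically an entry of $T^\sigma$ (because $x_{t_j}\notin T\supseteq T^\sigma$); hence $T^\sigma=U^{q,i}$ with $i=\tau_\sigma$ has infinitely many entries and we are done by the reformulation.

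The main obstacle is handling the residual case where no such $a$ belongs to the $t_1$-orbit of any seed of $\tilde T$, so one is forced to use $a\in S_{t_1}\setminus \tilde T_{t_1}$ and the orbit $U^{a,t_1}$, which is an accumulator of the support game but not contained in $\tilde T$. Then the inclusion $x_{t_j}\notin T\Rightarrow x_{t_j}\notin U^{a,t_1}_{t_j-1}$ no longer comes for free, and one must argue directly. I would finish by a second pigeonhole: among the finitely many $a\in S_{t_1}\cap E^{-1}(y^\star)$ and the finitely many possible values of $U^{a,t_1}_{t_j-1}$ and $x_{t_j}$, fix an infinite sub-subsequence along which these are all constant; if for some $a$ we have $x_{t_j}\notin U^{a,t_1}_{t_j-1}$ along the sub-subsequence then $U^{a,t_1}$ has infinitely many entries as required. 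The opposite possibility, that every such $a$ reaches every $x_{t_j}$ in $G[t_1,t_j-1]$, can be ruled out by combining it with the entry condition $x_{t_j}\notin T_{t_j-1}$ and successor-closure of $\tilde T$, yielding a contradiction with $A\cap\tilde T_{t_1}=\emptyset$. Once this bookkeeping is complete, the existence of the required index $i$ follows from the reformulation in the first paragraph.
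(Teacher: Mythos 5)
Your right-to-left direction is exactly the paper's argument (pigeonhole on the leak witness $q$, then the forward orbit of $q$ is an accumulator with infinitely many entries), and your reformulation of ``entry of the orbit $U^{q,i}$'' as ``leak witnessed by $q$'' is correct and useful. The hard direction, however, has a genuine gap, located precisely where you flag ``the main obstacle''. In the residual case you assert that the alternative ``every $a\in A=S_{t_1}\cap E^{-1}(y^\star)$ reaches every $x_{t_j}$ in $G[t_1,t_j-1]$'' contradicts $A\cap\tilde T_{t_1}=\emptyset$ together with $x_{t_j}\notin T_{t_j-1}$ and successor-closure. It does not: all of these facts are simultaneously satisfiable. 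Take four states $p,x,y,z$ and repeat forever the transfer graph with edges $(p,p),(p,x),(x,y),(y,z),(z,z)$. The minimal accumulator generated by the states $y$ (one seed per step) is eventually $\{y,z\}$ and has one entry $(x,y)$ per step, so $y^\star=y$; the idempotent is $E=G^3$, with $E^{-1}(y)=\{p\}$ and $p\notin\tilde T_{t_1}=\{y,z\}$, so you are in the residual case; moreover $p$ reaches $x$ at every later time, so you are in the ``opposite possibility'' --- and nothing is contradicted. Worse, the orbit $U^{p,t_1}$ genuinely has only finitely many entries (it saturates to all of $Q$ after three steps), so no further pigeonholing on that orbit can rescue the argument.

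The deeper problem is strategic: you are trying to show that the original entries $(x_{t_j},y^\star)$ are themselves entries of some single-point orbit rooted at an $E$-preimage of $y^\star$, but in the example above the witnessing orbits are those of $x$ or of $z$, whose infinitely many entries are the edges $(y,z)$ --- entirely different edges, at different positions, from the entries of $T$. So the witness cannot in general be found among $E^{-1}(y^\star)$, and its entries need not be related to the entries of the accumulator you started from. The paper avoids this with a different device: it defines the width of a successor-closed vertex set as the $\limsup$ over $n$ of the number of its rank-$n$ vertices, picks a successor-closed set with infinitely many incoming edges of minimal width, and shows that the forward orbit of its minimal-rank vertex still has infinitely many incoming edges (in the example this locates the width-one set of $z$-vertices). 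Your Ramsey/idempotent setup is a reasonable instinct and the ``good case'' is sound, but to close the hard direction you would need some analogous extremality argument to locate the right single vertex; the contradiction you sketch is not available.
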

\arkiv{
\begin{proof}
  To prove the right-to-left implication, assume that there exists an
  index $i$ such that $G[i,j]$ leaks at $G_{j{+}1}$ for an infinite
  number of indices $j$.  As the number of states is finite, there
  exist a state $q$ with an infinite number of indices $j$ such that
  we have some $(x_j,y_{j+1}) \in G_{j{+}1}$ with $(q,y_{j+1}) \in G[i,j{+}1]$, $(q,x_j) \notin G[i,j]$.  The accumulator generated by $T_i = \{q\}$ has an infinite number of entries, and we are done with this direction.

\begin{center}
\begin{tikzpicture}
\node at (0,.75) {$i$};
\node at (3,.75) {$j$};
\node at (5,.75) {$j{+}1$};

\draw (0,.5) -- (0,-1);
\draw (3,.5) -- (3,-1);
\draw (5,.5) -- (5,-1);

\draw (0,-.5) node (q) {$\bullet$};
\node  at (q.180) {$q$}; 

\draw(3,-.5) node (x) {$\bullet$};
\draw(3,0) node (x') {$\bullet$};
\node  at (x'.210) {$x$}; 
\draw(5,-.25) node (y) {$\bullet$};
\node  at (y.0) {$y$}; 
\draw[-latex',thick] (q) -- (x);
\draw[-latex',thick] (x') -- (y);
\draw[-latex',thick] (x) -- (y);

\node at (4,-1) {$G_{j{+}1}$};
\node at (1.5,-1) {$G[i,j]$};
\end{tikzpicture}
\end{center}

\medskip

For the left-to-right implication, assume that there is an accumulator
$(T_j)_{j \geq 0}$ with an infinite number of entries.  

\noindent For $X$ a subset of vertices of the DAG, $|X|_n$ denotes the
number of vertices of $X$ of rank $n$, and we define the \emph{width}
of $X$ as $\width(X) = \limsup_n |X|_n$. We use several times the
following property of the width.

$(\dagger)$ If $X_0 \neq \emptyset$ and $X_1$ are two disjoint
successor-closed sets, and if $X_0 \cup X_1 \subseteq X$, then then
$\width(X_1) < \width(X)$.

Let us prove property $(\dagger)$.  Let $r$ be the minimal rank of
vertices in $X_0$. Since $X_0$ is successor-closed and there is no
dead-end in the DAG, for every $n\geq r$, $X_0$ contains at least one
vertex of rank $n$. Because $X_0$ and $X_1$ are disjoint, we derive
$|X_1|_n + 1 \leq |X|_n$. Taking the limsup of this inequality we
obtain $(\dagger)$.

We pick $X$ a successor-closed set of nodes with infinitely many
incoming edges, of minimal width with this property. Let $v$ be a
vertex of $X$ of minimal rank and denote $S(v)$ for the set of
successors of $v$.  Let us show that $S(v)$ has infinitely many
incoming edges. Define $T(v)$ as the set of predecessors of vertices
in $S(v)$ and $Y = X \setminus T(v)$.  Then $Y$ is successor-closed
because $T(v)$ is predecessor-closed and $X$ is successor-closed.
Applying property $(\dagger)$ to $X_0=S(v)\subseteq X$ and
$X_1=Y\subseteq X$, we obtain $\width(Y) < \width(X)$.  By width
minimality of $X$ among successor-closed sets with infinitely many
incoming edges, $Y$ must have finitely many incoming edges only.
Since $Y = X \setminus T(v)$ and $X$ has infinitely many incoming
edges, then $T(v)$ has infinitely many incoming edges.  Thus there are
infinitely many edges connecting a vertex outside $S(v)$ to a vertex
of $S(v)$, so that $S(v)$ has infinitely many incoming edges.
\end{proof}
}

In this case, we say that index $i$ {\em leaks infinitely often}. 
Note that if $G$ separates $(r,t)$, and $r,
t$ have a common successor by $H$, then $G$ leaks at $H$.
To link leaks with separations, we consider for each index $k$, the pairs of states that have a common successor, in possibly several steps, as expressed by the symmetric relation $R_k$: $(r,t) \in R_k$ iff there exists $j\geq k$ and $y \in \states$ such that $(r,y) \in G[k,j]
\wedge (t,y) \in G[k,j]$.  
%
%
%

\begin{lemma}
\label{lemma.separation}
For $i<n$ two indices, the following three properties hold: 
\begin{enumerate}
\item If $G[i,n]$ separates $(r,t)\in R_{n}$, then there exists $m \geq n$ such that $G[i,m]$ leaks at $G_{m+1}$.

\item If index $i$ does not leak infinitely often, then the number of indices
  $j$ such that $G[i,j]$ separates some $(r,t)\in R_{j}$ is finite.

\item If index $i$ leaks infinitely often, then for all $j>i$,  $G[i,j]$ separates some $(r,t)\in R_{j}$.
\end{enumerate}
\end{lemma}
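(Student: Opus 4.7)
The plan is to prove items (1), (2), (3) in sequence; (2) will fall out as a contrapositive of (1), while (3) uses a dual forward/backward tracking argument to (1). For convenience throughout, for a state $q$ and indices $i\le k$ define the \emph{forward cone}
\[
P_k = \{ s \in Q \mid (q,s)\in G[i,k]\}.
\]
Note that if $(q,s)\in G[i,k{-}1]$ and $(s,s')\in G_k$ then $(q,s')\in G[i,k]$, so $P_k$ behaves well under composition along actual paths of the play.

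For item (1), I fix a separation witness $q$, i.e.\ $(q,r)\in G[i,n]$ and $(q,t)\notin G[i,n]$, and I use $(r,t)\in R_n$ to obtain an index $m'\ge n$ and a common target $y$ together with actual realising paths $r=r_n,r_{n+1},\dots,r_{m'}=y$ and $t=t_n,\dots,t_{m'}=y$ along the $G_{k}$'s. Composition gives $r_k\in P_k$ for every $n\le k\le m'$, so in particular $y\in P_{m'}$. Since $t_n=t\notin P_n$ but $t_{m'}=y\in P_{m'}$, there is a smallest index $k^*\in\{n{+}1,\dots,m'\}$ with $t_{k^*}\in P_{k^*}$, and by minimality $t_{k^*-1}\notin P_{k^*-1}$. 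Setting $m=k^*-1\ge n$, the triple $(q,t_m,t_{m+1})$ then witnesses that $G[i,m]$ leaks at $G_{m+1}$: one has $(q,t_{m+1})\in G[i,m{+}1]$, $(t_m,t_{m+1})\in G_{m+1}$, and $(q,t_m)\notin G[i,m]$. This is exactly the leak structure required.

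Item (2) is immediate from (1) by contraposition: if infinitely many indices $j$ satisfy that $G[i,j]$ separates some $(r,t)\in R_j$, then each such $j$ produces by item~(1) a leak index $m_j\ge j$; since $m_j\ge j$, the family $\{m_j\}$ is unbounded, so $i$ leaks infinitely often.

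For item (3), fix any $j>i$ and, using the assumption that $i$ leaks infinitely often, pick any leak index $m\ge j$: there exist $q,x,y$ with $(q,y)\in G[i,m{+}1]$, $(x,y)\in G_{m+1}$, $(q,x)\notin G[i,m]$. First, fix an actual path $q=q_i,q_{i+1},\dots,q_{m+1}=y$ realising $(q,y)\in G[i,m{+}1]$; let $q_j$ be its value at step $j$, so $(q,q_j)\in G[i,j]$ and $(q_j,y)\in G[j,m{+}1]$. Second, since $x\in\dom(G_{m+1})=S_m$ and every state of $S_k$ has a predecessor in $S_{k-1}$ via $G_k$, I can trace $x=t_m$ backwards along $G_m,G_{m-1},\dots,G_{j+1}$ to obtain a state $t_j\in S_j$ with $(t_j,x)\in G[j,m]$ and hence $(t_j,y)\in G[j,m{+}1]$. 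The pair $(q_j,t_j)$ therefore belongs to $R_j$, and $G[i,j]$ separates it: $(q,q_j)\in G[i,j]$ by construction, while $(q,t_j)\in G[i,j]$ would give $(q,x)=(q,t_j)\cdot(t_j,x)\in G[i,j]\cdot G[j,m]=G[i,m]$, contradicting the leak.

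The only non-routine step is item~(1), where the right choice is the \emph{first} index at which the $t$-path enters the forward cone of $q$; this is what converts a ``static'' separation at time $n$ into a ``dynamic'' leak along one single transfer graph $G_{m+1}$. Item~(3) is structurally dual and only requires coordinating a forward trace from $q$ with a backward trace from $x$, the latter being available because each support along the play is the image of the preceding transfer graph.
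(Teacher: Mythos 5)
Your proof is correct and follows essentially the same route as the paper's: item (1) via the first index at which the $t$-path enters the forward cone of the separation witness, item (2) by contraposition using $m_j\ge j$, and item (3) by coordinating a forward trace from $q$ with a backward trace from $x$ to a common successor. The only (harmless) difference is that in item (3) you pick a fresh leak witness for each $j$ rather than first extracting, by pigeonhole, a single state $q$ that witnesses infinitely many leaks as the paper does; this slightly streamlines the argument without changing its substance.
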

\arkiv{
\begin{proof}
  We start with the proof of the first item.  Assume that $G[i,n]$
  separates a pair $(r,t) \in R_n$.  Hence there exists $q$ such that
  $(q,r) \in G[i,n]$, $(q,t) \notin G[i,n]$.  Now, from $(r,t)\in
  R_n$, we derive the existence of an index $k > n$ and a state
  $y$ such that $(r,y) \in G[n,k]$ and $(t,y) \in G[n,k]$.  Hence,
  there exists a path $(t_j)_{n \leq j \leq k}$ with $t_n=t$, $t_{k}=y$,
  and $(t_j,t_{j{+}1}) \in G_{j+1}$ for all $n \leq j < k$.  Moreover,
  there is a path from $q$ to $y$ because there are paths from $q$ to
  $r$ and from $r$ to $y$.  Let $\ell \leq k$ be the minimum index
  such that there is a path from $q$ to $t_\ell$.  As there is no path
  from $q$ to $t_n=t$, necessarily $\ell \geq n+1$. Obviously,
  $(t_{\ell-1},t_{\ell}) \in G_{\ell}$, and by definition and
  minimality of $\ell$, $(q,t_{\ell-1}) \notin G[i,\ell-1]$ and
  $(q,t_{\ell}) \in G[i,\ell]$. That is, $G[i,\ell-1]$ leaks at
  $G_{\ell}$.

\medskip

Let us now prove the second item, using the first one. Assume that
$i$ does not leak infinitely often, and towards a contradiction
suppose that there are infinitely many $j$'s such that $G[i,j]$
separates some $(r,t) \in R_{j}$. To each of these separations, we
can apply item {\bf \emph{1.}} to obtain infinitely many indices
$m$ such that $G[i,m]$ leaks at $G_{m+1}$, a contradiction.

\medskip

We now prove the last item. Since there are finitely many states in
$\states$, there exists $\state \in \states$ and an infinite set $J$
of indices such that for every $j \in J$, $(q,y_{j{+}1}) \in G[i,j{+}1]$,
$(q,x_j) \notin G[i,j]$, and $(x_j,y_{j{+}1}) \in G_{j{+}1}$ for some
$x_j,y_{j{+}1}$. The path from $q$ to $y_{j{+}1}$ implies the existence of
$y_j$ with $(q,y_j) \in G[i,j]$, and $(y_j,y_{j{+}1}) \in G_{j{+}1}$. We
thus found separated pairs $(x_j,y_j)$ for every $j \in J$. To exhibit
separations at other indices $k >j$ with $k \notin J$, the natural
idea is to consider predecessors of the $x_j$'s and $y_j$'s.

\begin{center}
\begin{tikzpicture}
\node at (0,.75) {$i$};
\node at (3,.75) {$k$};
\node at (6,.75) {$j$};
\node at (8,.75) {$j{+}1$};

\draw (0,.5) -- (0,-1);
\draw (3,.5) -- (3,-1);
\draw (6,.5) -- (6,-1);
\draw (8,.5) -- (8,-1);

\draw (0,0) node (q) {$\bullet$};
\node  at (q.180) {$q$}; 

\draw (3,0) node (q') {$\bullet$};
\node  at (q'.35) {$r_k$}; 

\draw (3,-.5) node (tk) {$\bullet$};
\node  at (tk.325) {$t_k$}; 

\draw(6,-.5) node (x) {$\bullet$};
\node  at ([shift={(.5:.2)}]x.300) {$x_j$}; 
\draw(6,0) node (x') {$\bullet$};
\node  at ([shift={(-.5:.4)}]x'.110)  {$y_j$}; 
\draw(8,-.25) node (y) {$\bullet$};
\node  at ([shift={(.5:.5)}]y.0) {$y_{j{+}1}$}; 
\draw[-latex',thick] (q) -- (q');
\draw[-latex',thick] (q') -- (x');
\draw[-latex',thick] (x') -- (y);
\draw[-latex',thick] (x) -- (y);
\draw[-latex',thick] (tk) -- (x);

\node at (1.5,-1) {$G[i,k]$};
\node at (4.5,-1) {$G[k,j]$};
\node at (7,-1) {$G_{j+1}$};
\end{tikzpicture}
\end{center}

\noindent We define sequences $(r_k,t_k)_{k \geq i}$ inductively as follows.  To
define $r_k$, we take a $j \geq k+1$ such that $j \in J$; this is
always possible as $J$ is infinite.  There exists a state $r_k$
such that $(q,r_k) \in G[i,k]$ and $(r_k,y_j) \in G[k,j]$.

\noindent Also, as $x_j$ belongs to $\im(G[1,j])$, there must
exist a state $t_k$ such that $(t_k,x_j)\in G[k,j]$.  Clearly, $(q,
t_k) \notin G[i,k]$, else $(q,x_j) \in G[i,j]$, which is not
true.  Last, $y_{j{+}1}$ is a common successor of $t_k$ and
$r_k$, that is $(t_k,y_{j{+}1}) \in G[k,j+1]$ and
$(r_k,y_{j{+}1}) \in G[k,j+1]$. Hence $G[i,k]$ separates
$(r_{k},t_{k}) \in R_{k}$.
\end{proof}
}

\subsection{The tracking list}
\label{subsec:tl}
The \emph{tracking list} exploits the relationship between leaks and
separations. It is a list of transfer graphs which altogether separate
all possible pairs, and are sufficient to detect when leaks
occur. Notice that telling at step $j$ whether the pair $(r,t)$
belongs to $R_j$ cannot be performed by a deterministic automaton.
We thus \emph{a priori} have to
consider every pair $(r,t) \in \states^2$ for separation.  The
tracking list ${\cal L}_n$ at step $n$ is defined inductively
as follows.  ${\cal L}_0$ is the empty list, and for $n>0$, the list
${\cal L}_{n}$ is computed in three stages:
\begin{enumerate}
\item first, every graph $H$ in the list ${\cal L}_{n-1}$ is
  concatenated with $G_{n}$, yielding $H \cdot G_{n}$;
\item second, $G_{n}$ is added at the end of the obtained list;
\item last, the list is filtered: a graph $H$ is kept if and only if
  it separates a pair of states $(p,q)\in \states^2$ which is not separated by any graph that appears earlier in the list.
\end{enumerate}
Because of the third item, there are at most $|\states|^2$ graphs in
the tracking list. The list may become empty if no pair of states is
separated by any graph, for example if all the graphs are
complete. Let ${\cal L}_n = \{H_1 ,\cdots ,H_\ell\}$ be the tracking
list at step $n$. Then each transfer graph $H_r \in {\cal L}_n$ is of
the form $H_r = G[t_r,n]$. We say that $r$ is the \emph{level} of
$H_r$, and $t_r$ the \emph{index tracked} by $H_r$. Observe that the
lower the level of a graph in the list, the smaller the index it
tracks.
%
%
When we consider the sequence of tracking lists $({\cal L}_n)_{n \in
  \mathbb{N}}$, for every index $i$, either it eventually stops to be
tracked or it is tracked forever from step $i$,
\emph{i.e.}\ for every $n \geq i$, $G[i,n]$ belongs to ${\cal
  L}_n$. In the latter case, $i$ is said to be \emph{remanent} (because it will never disappear).

Using Lemma~\ref{lemma.leaks} and the second and third statements of
Lemma~\ref{lemma.separation}, we obtain:

\begin{lemma}
\label{lem:caracPG}
A play has infinite capacity iff there exists an index $i$ such that
$i$ is remanent and leaks infinitely often.
\end{lemma}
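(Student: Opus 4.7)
The backward direction is immediate from Lemma~\ref{lemma.leaks}: if some index $i$ is remanent and leaks infinitely often, then in particular some index leaks infinitely often, so the play has infinite capacity; the remanence hypothesis is not used here.

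For the forward direction, I would assume the play has infinite capacity and denote by $L$ the set of indices that leak infinitely often, which is non-empty by Lemma~\ref{lemma.leaks}. The plan is to show that $i^{*}=\min L$ is remanent, which provides the required witness. Suppose for contradiction that $i^{*}$ is removed from the tracking list at some step $\nu$. The filtering rule then guarantees that every pair separated by $G[i^{*},\nu]$ is also separated by some $G[t,\nu]$ with $t<i^{*}$ and $t\in T_{\nu-1}$. The aim is to produce some $t^{*}<i^{*}$ that itself belongs to $L$, contradicting the minimality of $i^{*}$.

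The heart of the argument is a propagation claim: for every $m\geq\nu$ there exists $t_{m}<i^{*}$ such that $G[t_{m},m]$ separates some pair in $R_{m}$. To establish it, item~3 of Lemma~\ref{lemma.separation} applied to $i^{*}$ produces, for each such $m$, a pair $(r_{m},s_{m})\in R_{m}$ separated by $G[i^{*},m]$ via a witness $q$. Writing $G[i^{*},m]=G[i^{*},\nu]\cdot G[\nu,m]$, the image of $q$ under $G[i^{*},\nu]$ contains a state reaching $r_{m}$ through $G[\nu,m]$ but misses every predecessor of $s_{m}$ through $G[\nu,m]$; this exhibits an auxiliary pair separated by $G[i^{*},\nu]$, which by the filtering at step $\nu$ is covered by some $G[t,\nu]$ with $t<i^{*}$, and recomposing with $G[\nu,m]$ delivers the desired $t_{m}$. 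Granting the propagation claim, pigeonhole over the finite set $\{0,\ldots,i^{*}-1\}$ yields a single $t^{*}<i^{*}$ such that $G[t^{*},m]$ separates a pair in $R_{m}$ for infinitely many $m$; the contrapositive of item~2 of Lemma~\ref{lemma.separation} then puts $t^{*}\in L$, giving the desired contradiction.

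The main obstacle is precisely this propagation step: a naive recomposition of a separating witness at step $\nu$ into one at step $m$ can fail, because a witness $q'$ separating the auxiliary pair via $G[t,\nu]$ may have other successors through $G[t,\nu]$ that reach $s_{m}$ through $G[\nu,m]$. Overcoming this requires either a careful choice of the auxiliary pair or the combination of several lower-indexed graphs, exploiting the structural rigidity imposed by $(r_{m},s_{m})\in R_{m}$ on the predecessor sets through $G[\nu,m]$.
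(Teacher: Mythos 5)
Your backward direction is correct. The forward direction, however, has a genuine gap, and it is exactly the one you flag in your last paragraph: the propagation claim is the crux, it fails as stated, and your proposal does not repair it. Concretely, from a pair $(r_m,s_m)\in R_m$ separated by $G[i^*,m]$ via a witness $q$, you extract an auxiliary pair $(x,y)\in R_\nu$ separated by $G[i^*,\nu]$, where $x$ is a $G[\nu,m]$-predecessor of $r_m$ in the $G[i^*,\nu]$-image of $q$ and $y$ is some $G[\nu,m]$-predecessor of $s_m$. The filtering rule then supplies $t<i^*$ and $q'$ with $(q',x)\in G[t,\nu]$ and $(q',y)\notin G[t,\nu]$. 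But this controls only the single state $y$: nothing prevents $q'$ from having another $G[t,\nu]$-successor that is also a $G[\nu,m]$-predecessor of $s_m$, in which case $(q',s_m)\in G[t,m]$ and $q'$ does not separate $(r_m,s_m)$ through $G[t,m]$. Separation is a property of individual rows of the relation and is not preserved under right-composition, even when two graphs separate exactly the same pairs; so the pigeonhole over $t<i^*$ never gets started. Worse, the statement you are aiming for --- that the \emph{minimal} infinitely-leaking index $i^*$ is remanent --- is strictly stronger than what the lemma requires; the paper does not establish it, and there is no evident reason it should hold.

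The paper's proof takes a weaker and safer target: it produces \emph{some} remanent index $k\geq i$, not $i$ itself. Let $i$ be minimal among the infinitely-leaking indices. First, every $k\geq i$ also leaks infinitely often: a leak of $G[i,j]$ at $G_{j+1}$ witnessed by $q,x,y$ yields a leak of $G[k,j]$ at $G_{j+1}$ witnessed by an intermediate state $q'$ with $(q,q')\in G[i,k]$ lying on the path from $q$ to $y$. Second, by minimality of $i$ and item~2 of Lemma~\ref{lemma.separation}, there is a threshold $N$ such that for every $\ell<i$ and every $j>N$, the graph $G[\ell,j]$ separates no pair of $R_j$. Assuming for contradiction that no $k\geq i$ is remanent, one picks a minimal $n>N$ and a step $J$ at which $G[n,J]$ is about to be dropped while every graph preceding it in the tracking list tracks an index smaller than $i$; those preceding graphs separate no pair of $R_J$ by the choice of $N$, whereas $G[n,J]$ separates some pair of $R_J$ by item~3 of Lemma~\ref{lemma.separation} (since $n\geq i$ leaks infinitely often), so $G[n,J]$ survives the filtering --- a contradiction. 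Redirecting your argument toward this weaker goal, rather than trying to force remanence of $i^*$ through the non-compositional separation step, is the way to close the gap.
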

\arkiv{
\begin{proof}
  The direction from right-to-left is trivial. Assume the play has
  finite capacity, and let $i$ be a remanent index. By Lemma
  \ref{lemma.leaks}, $i$ does not leak infinitely oten.

\medskip

For the other direction, assume that the play has infinite capacity.
By Lemma~\ref{lemma.leaks}, there exists an index $i$ that leaks
infinitely often. We choose $i$ minimal with this property.

We first show that for all $k\geq i$, $k$ leaks infinitely often as well. There are infinitely many indices $j > i$ such that $G[i,j]$ leaks at $G_{j{+}1}$. For each such index $j$, there are states $q,x,y$ such that
$(q,y) \in G[i,j{+}1]$, $(q,x) \notin G[i,j]$ and $(x,y) \in
G_{j{+}1}$. Consider any index $i\leq k\leq j$. There exists a state
$q'$ such that $(q,q') \in G[i,k]$ and $(q',y) \in G[k,j{+}1]$. We thus
have $(q',y) \in G[k,j{+}1]$, $(q',x) \notin G[k,j]$ and $(x,y) \in
G_{j{+}1}$. Thus $G[k,j]$ leaks at $G_{j{+}1}$.  This holds for all $j >i$
and $i \leq k \leq j$, so that for all $k \geq i$, $G[k,j]$ leaks at
$G_{j{+}1}$ for infinitely many indices $j$.

\begin{center}
\begin{tikzpicture}
\node at (0,.75) {$i$};
\node at (3,.75) {$k$};
\node at (6,.75) {$j$};
\node at (8,.75) {$j{+}1$};

\draw (0,.5) -- (0,-1);
\draw (3,.5) -- (3,-1);
\draw (6,.5) -- (6,-1);
\draw (8,.5) -- (8,-1);

\draw (0,0) node (q) {$\bullet$};
\node  at (q.180) {$q$}; 

\draw (3,0) node (q') {$\bullet$};
\node  at (q'.325) {$q'$}; 

\draw(6,-.5) node (x) {$\bullet$};
\node  at (x.210) {$x$}; 
\draw(6,0) node (x') {$\bullet$};
\draw(8,-.25) node (y) {$\bullet$};
\node  at (y.0) {$y$}; 
\draw[-latex',thick] (q) -- (q');
\draw[-latex',thick] (q') -- (x');
\draw[-latex',thick] (x') -- (y);
\draw[-latex',thick] (x) -- (y);

\node at (1.5,-1) {$G[i,k]$};
\node at (4.5,-1) {$G[k,j]$};
\node at (7,-1) {$G_{j+1}$};
\end{tikzpicture}
\end{center}

We prove now that some $k \geq i$ is remanent, which will finish the proof. Towards a contradiction, assume that it is not the case.

Let $\ell < i$. By minimality of $i$, $\ell$ leaks only finitely
often. Applying the second statement of Lemma~\ref{lemma.separation},
there are only finitely many indices $j \geq \ell$ such that
$G[\ell,j]$ separates some pair of $R_{j}$. We let
$j_\ell$ the maximum of these indices, and $N = \max_{\ell <i}
j_\ell$. By definition of $N$, for all $\ell <i$ and all $j >N$,
$G[\ell,j]$ separates no pair of $R_{j}$.

Fix now $n>N$, the minimal index such that there exists $j$ with $G[n,j] \in
{\cal L}_j$ and for all $i \leq k \leq N$, $G[k,j] \notin {\cal
  L}_j$. The existence of $n$ is guaranteed since we assumed for
contradiction that no $k \geq i$ is remanent.  Let $J$ be the step at
which index $n$ is no longer tracked in the list. Just before the list
is filtered to obtain ${\cal L}_J$, it starts with a prefix of the form: $G[i_1,J], \cdots, G[i_\ell,J], G[n,J]$. By definition of $n$, the indices $i_1, \cdots, i_\ell$ are smaller than $i$. That is, $i_1 <
\cdots <i_\ell < i < N < n \leq J$.

Now, the choice of $N$ guarantees that for all $1 \leq k \leq \ell$,
$G[i_k,J]$ separates no pair in $R_{J}$. Moreover, $n \geq i$ thus $n$
leaks infinitely often, and by the third statement of
Lemma~\ref{lemma.separation}, $G[n,J]$ separates some pair of $R_{J}$,
which cannot be separated by any $G[i_k,J]$. Therefore, during the
third stage, $G[n,J]$ is not filtered. This contradicts the definition
of $J$ as the step after which index $n$ is no longer tracked.

Thus some index larger than $i$ is remanent, and leaks infinitely often.
\end{proof}
}

\subsection{The parity game}
We now describe a parity game $\PG$, which extends the support arena with 
on-the-fly computation of the tracking list.

{\medskip \noindent \bf Priorities.} By convention, lowest priorities are the most important
and the odd parity is good for \playerone, so \playerone\ wins iff the $\liminf$ of the priorities
is odd. With each level $1\leq r \leq |\stnb|^2$ of the tracking list are associated two priorities
$2r$ and $2r+1$, and on top of that are added priorities $1$ and $2 |\stnb|^2+2$,
hence the set of all priorities is $\{1,\ldots,2 |\stnb|^2+2\}$. 

When \playertwo\ chooses a transition labelled by a transfer graph
$G$, the tracking list is updated with $G$ and the priority of the
transition is determined as the smallest among: priority 1 if the
support  $\{f\}$ has ever been visited, priority $2r+1$ for the
smallest $r$ such that $H_r$ (from level $r$) leaks at $G$,
priority $2r$ for the smallest level $r$ where a graph was removed,
and in all other cases priority $2 |\stnb|^2+2$.
%
%

{\medskip \noindent \bf  States and transitions.}
$\GG^{\leq |\stnb|^2}$ denotes the set of list of at most $|\stnb|^2$ transfer graphs.
\begin{itemize}
\item 
States of $\PG$ form a subset of $\{0,1\}  \times 2^Q\times \GG^{\leq |\stnb|^2}$,
each state being of the form
$(b,S,H_1,\ldots,H_\ell)$ with $b \in \{0,1\}$ a bit indicating 
whether a support in $\{f\}$ has been seen, $S$ the current support
and $(H_1,\ldots,H_\ell)$ the tracking list.
%
The initial state is $(0,\{\state_0\},\emptyset)$.

\item Transitions in $\PG$ are all $(b,S,H_1,\ldots,H_{\ell})
  \arr{\priority,a,G} (b',S',H'_1,\ldots,H'_{\ell'})$ where $\priority$
  is the priority, and such that $S \arr{a,G} S'$ is a transition of
  the support arena, and
  \begin{enumerate}
  \item $(H'_1,\ldots,H'_{\ell'})$ is the tracking list obtained by
    updating the tracking list $(H_1,\ldots,H_{\ell})$ with $G$, as
    explained in subsection~\ref{subsec:tl};
    \item if $b=1$ or if $S' \subseteq \targetset$, then 
     $\priority =1$ and $b'=1$;
\item otherwise $b'=0$.
In order to compute the priority $\priority$, we let $\priority'$ be the smallest level $1\leq r \leq \ell$ such that $H_r$ 
leaks at $G$ and $\priority'=\ell+1$ if there is no such level, and we
also let
$\priority''$ as the minimal level $1\leq r \leq \ell$ such that $H'_{r}
\neq H_{r} \cdot G$ and $\priority''=\ell+1$ if there is no such
level. Then $\priority=\min( 2 \priority'+1, 2 
\priority'')$.  
\end{enumerate}
\end{itemize}


We are ready to state the main result of this paper, which yields an
\EXPTIME complexity for the population control problem. This entails
the first statement of Theorem~\ref{th1}, and together with
Proposition~\ref{th.2wins}, also the first statement of Theorem~\ref{th2}.

\begin{theorem}
  \playerone\ wins the game $\PG$ if and only if \playerone\ wins the
  \ror\ game.
  Solving these games can be done in time
  $O(2^{(1+|\stnb|+|\stnb|^4)(2|\stnb|^2+2)})$. 
Strategies with $2^{|\stnb|^4}$ 
memory states are sufficient to both \playerone\ and \playertwo.
\end{theorem}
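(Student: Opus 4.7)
The plan is to establish the three claims in turn: the equivalence between winning $\PG$ and winning the \ror\ game, the complexity bound, and the memory bound.

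For the equivalence, I first observe that the auxiliary component of a $\PG$-state (the bit $b$ and the tracking list) is a deterministic function of the play in the support arena, so plays in $\PG$ are in bijection with plays in the support arena and strategies carry over faithfully between the two games. It therefore suffices to verify play-by-play that the parity condition in $\PG$ agrees with the \ror\ winning condition. When the play reaches $\{f\}$, the bit turns to $1$ and the priority is $1$ at every subsequent step, so $\liminf = 1$ (odd), matching \playerone's victory. For plays that never hit $\{f\}$, I invoke Lemma~\ref{lem:caracPG}: infinite capacity is equivalent to the existence of a remanent index $i$ leaking infinitely often. Since removing a graph at list-level $r' < r$ during filtering strictly decreases the level of every surviving graph above it, the level at which $i$ is tracked is non-increasing and, being bounded below by $1$, stabilizes at some $r$. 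From that point on, no graph at level $\leq r$ can be removed: a removal below level $r$ would further decrease $i$'s level, and a removal at level $r$ would remove $i$'s own graph, contradicting remanence. Hence no priority in $\{2,4,\ldots,2r\}$ is emitted beyond the stabilization step. Meanwhile $i$ leaking infinitely often at level $r$ forces an odd priority bounded by $2r+1$ to be emitted infinitely often, so $\liminf$ is odd.

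Conversely, if $\liminf = 2r+1$ for some $r \geq 1$ and $\{f\}$ is never reached, then from some step onwards no priority in $\{1,\ldots,2r\}$ appears. In particular no level $\leq r$ is ever removed after that step, so the tracked indices at these levels are remanent, and priority $2r+1$ recurring infinitely often forces level $r$ (whose tracked index is remanent) to leak infinitely often, yielding infinite capacity via Lemma~\ref{lem:caracPG}. This closes the equivalence. For the complexity, the state space of $\PG$ has cardinality at most $2 \cdot 2^{|Q|} \cdot (2^{|Q|^2})^{|Q|^2} = 2^{1+|Q|+|Q|^4}$, since the tracking list is a sequence of at most $|Q|^2$ transfer graphs each being a subset of $Q \times Q$, and the number of priorities is $2|Q|^2+2$. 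Standard parity-game algorithms run in time bounded by the number of states raised to the number of priorities, giving $O(2^{(1+|Q|+|Q|^4)(2|Q|^2+2)})$. For memory, positional determinacy of parity games provides both players with memoryless strategies in $\PG$; projected onto the support arena, these become finite-memory strategies whose memory records the tracking list and bit, of size at most $2^{|Q|^4}$ up to a constant factor.

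The main obstacle lies in the equivalence argument, where one must precisely track how the priority machinery reflects the combinatorics of the tracking list. The delicate points are justifying that a remanent index's level is non-increasing, that stabilization prevents removals at that level and below, and that these removals are exactly what the even priorities $2r$ and below measure. Once these points are pinned down, the parity condition lines up cleanly with Lemma~\ref{lem:caracPG}, and the bounds on complexity and memory follow from standard parity-game machinery.
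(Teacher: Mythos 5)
Your proof is correct and follows essentially the same route as the paper's: identify plays and strategies of $\PG$ with those of the \ror\ game via the deterministic tracking-list automaton, match the parity condition against Lemma~\ref{lem:caracPG}, and invoke standard parity-game algorithms and positional determinacy for the complexity and memory bounds. The only difference is that you spell out the combinatorics the paper compresses into ``by design of $\PG$'' --- the stabilization of a remanent index's level and the correspondence between even priorities and removals from the tracking list --- which is an elaboration of the same argument rather than a different approach.
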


\begin{proof}
{\color{black} 
The state space of parity game $\PG$ is the product of the 
set of supports  
with a deterministic automaton computing the tracking list.
As the state space of the \ror\ game is also 
the set of supports, there is a natural correspondence between plays and strategies in the parity game $\PG$
and in the \ror\ game.

\playerone\ can win the parity game $\PG$ in two ways: either the play
visits the support $\{f\}$, or the priority of the play is $2r+1$ for
some level $1\leq r \leq |Q|^2$. By design of $\PG$, this second
possibility occurs iff $r$ is remanent and leaks infinitely often.
According to Lemma~\ref{lem:caracPG}, this occurs if and only if the
corresponding play of the \ror\ game has infinite capacity.  Thus
\playerone\ wins $\PG$ iff he wins the capacity game.

In the parity game $\PG$, there are at most $2^{1 +
  |\stnb|}\left(2^{|\stnb|^2}\right)^{|Q|^2}=2^{1 + |\stnb| +
  |\stnb|^4}$ states and $2|Q|^2+2$ priorities, implying the
complexity bound using state-of-the-art
algorithms~\cite{JurdzinskiStacs2000}.  
Actually the complexity is even quasi-polynomial according to the
algorithms in~\cite{sanjay}.  Notice however that this has little impact on the complexity of the population control problem, as the number of priorities is logarithmic in the number of states of our parity game.

Further, it is well known that the winner of a parity game has a
positional winning strategy~\cite{JurdzinskiStacs2000}. 
A \emph{positional} winning strategy $\sigma$ in the game $\PG$
corresponds to a \emph{finite-memory} winning strategy $\sigma'$ in
the \ror\ game, whose memory states are the states of $\PG$. Actually
in order to play $\sigma'$, it is enough to remember the tracking list,
\emph{i.e.}\ the third component of the state space of $\PG$. Indeed,
the second component, in $2^Q$, is redundant with the actual state of
the \ror\ game and the bit in the first component is set to $1$ when
the play visits $\{f\}$ but in this case the \ror\ game is won by
\playerone\ whatever is played afterwards. Since there at most
$2^{|Q|^4}$ different tracking lists, we get the upper bound on the
memory.} { }
\end{proof}
	\section{Lower bounds}

The proofs of Theorems~\ref{th1} and~\ref{th2} are concluded by 
the proofs of lower bounds.

\begin{theorem}
The population control problem is \EXPTIME-hard.
\end{theorem}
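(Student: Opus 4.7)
The plan is to reduce from the acceptance problem of alternating Turing machines with polynomial space bound, a canonical \EXPTIME-complete problem. Given such a machine $M$ with space bound $p(n)$ and an input $x$ of length $n$, I would construct in polynomial time an NFA $\nfa_{M,x}$ and a target state $\targetstate$ such that \playerone\ wins the population control problem on $\nfa_{M,x}$ if and only if $M$ accepts $x$.

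The construction exploits the asymmetry between the players: \playerone\ chooses the same action for every agent (simulating existential choices of $M$), while \playertwo\ distributes agents along different nondeterministic transitions (simulating universal branching, and also acting as a verifier of \playerone's claims). The current configuration of $M$ is not stored in a single agent but \emph{encoded across the population}: each agent carries one ``cell record'' consisting of a tape position (in $\{1,\dots,p(n)\}$, encoded in logarithmically many states) and the symbol currently written there, together with a flag indicating whether the head is on that cell and in what state of $M$. Since \playerone\ must win for \emph{all} population sizes $m$, we can assume $m$ is large enough that every tape position is represented by many agents in every round.

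The simulation proceeds in phases, one per step of $M$. In each phase \playerone\ announces the transition to be executed (if $M$'s current state is existential) or a neutral ``dispatch'' action (if it is universal), and \playertwo\ resolves the nondeterminism by picking, for every agent, the successor cell record. Agents whose cell is adjacent to the head position must be updated according to $M$'s transition function, while agents representing other cells must keep their content. To prevent \playertwo\ from cheating---e.g.\ rewriting a cell far from the head, or announcing an incorrect successor state---a verification gadget allows \playerone\ to play a ``challenge'' letter at any moment: honest play routes every agent through further transitions and ultimately into $\targetstate$ once $M$ reaches its accepting state, whereas any inconsistency exposed by the challenge leaves an agent stuck in a non-target sink. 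Symmetrically, on universal steps \playertwo\ must spread agents across all universal successors; otherwise an empty branch concedes a correct simulation to \playerone.

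The main technical obstacle is balancing the verification gadget so that both directions hold simultaneously: (i)~if $M$ accepts, \playerone\ has a symbolic winning strategy mirroring $M$'s accepting sub-tree, independent of the population size, exploiting the fact that he may always challenge \playertwo's move and win every $m$-population game by the cut-off argument used in Section~\ref{sec:cutoff}; (ii)~if $M$ rejects, \playertwo\ can keep at least one agent forever outside $\targetstate$ by splitting agents faithfully along every universal transition of $M$'s rejecting sub-tree. Designing the gadget so that cheating is strictly worse than honest play for \playertwo, while at the same time ensuring that \playerone\ gains no hidden power from the uniform-action constraint (he cannot move different cells differently in a single step), is the delicate part; once this is in place the encoding is evidently polynomial, completing the \EXPTIME\ lower bound.
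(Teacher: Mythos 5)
Your high-level plan matches the paper's: reduce from acceptance of a polynomial-space alternating Turing machine, encode the configuration of $M$ across the population (one agent per cell record, plus agents for head position and control state), let \playerone's uniform actions announce transitions, and let trash/challenge actions punish whichever player deviates from a faithful simulation. Up to the inessential remark about encoding positions in logarithmically many states (one state per position is already polynomial and is what the construction needs, since an agent occupies a single state), this part is sound and is essentially the paper's \PSPACE-hardness core.

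The genuine gap is the mechanism for universal branching, which is exactly the part you defer as ``the delicate part.'' You propose that on universal steps \playertwo\ ``must spread agents across all universal successors.'' This cannot work with a shared tape encoding: the population represents \emph{one} configuration of $M$ at a time, and cell-record agents cannot be tagged with the branch they belong to, so simulating several universal successors in parallel corrupts the encoding. The paper instead has \playertwo\ select a \emph{single} transition by moving a designated agent from a choice state $C$ into one of the per-transition states $t$, after which \playerone\ is forced (on pain of sending agents to $\frownie$) to play the matching action $a_{t,p}$. The remaining difficulty --- which your proposal does not address --- is that \playertwo\ may cheat by placing \emph{several} agents in $C$ and later scattering them over distinct states $t_1,\dots,t_k$, thereby refusing to commit to one universal successor. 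The paper defeats this with a dedicated gadget: actions $\store_{t}$ park all but one of these agents in a storage state $\sstore$, \playerone\ then wins from the single surviving $t_1$, plays $\restart$ to send everyone back to $q_0$, and proceeds by induction on the strictly decreasing number of agents. Without this (or an equivalent) device, direction (i) of your equivalence fails: a cheating \playertwo\ could block \playerone\ even when $M$ accepts. Your appeal to ``the cut-off argument used in Section~\ref{sec:cutoff}'' does not fill this hole, since that section only gives monotonicity in $m$, not a way to neutralise duplicated choice agents.
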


\begin{proof}
  We first prove \PSPACE-hardness of the population control problem,
  reducing from the halting problem for polynomial space Turing
  machines.  We then extend the result to obtain the
  \EXPTIME-hardness, by reducing from the halting problem for
  polynomial space {\em alternating} Turing machines.  Let
  $\mathcal{M} = (S,\Gamma,T,s_0,s_f)$ be a Turing machine with
  $\Gamma = \{0,1\}$ as tape alphabet. By assumption, there exists a
  polynomial $P$ such that, on initial configuration
  $x \in \{0,1\}^n$, $\mathcal{M}$ uses at most $P(n)$
  tape cells. A transition $t \in T$ is of the form
  $t=(s, s', b, b', d)$, where $s$ and $s'$ are, respectively, the
  source and the target control states, $b$ and $b'$ are,
  respectively, the symbols read from and written on the tape, and
  $d \in \{\leftarrow,\rightarrow,-\}$ indicates the move of the tape
  head. From $\mathcal{M}$ and $x$, we build an NFA
  $\nfa = (\states, \Sigma, \state_0,\Delta)$ with a distinguished
  state $\smiley$ such that, $\mathcal{M}$ terminates in $s_f$ on
  input $x$ if and only if $(\nfa,\smiley)$ is a positive instance of
  the population control problem.  

  The high-level description of $\nfa$ is as follows. States in
  $\states$ are of several types: contents of the $P(n)$ cells (one
  state $(b,p)$ per content and per position), position of the tape
  head (one state $p$ per possible position), control state of the
  Turing machine (one state $s$ per control state), and three special
  states, namely an initial state $q_0$, a sink winning state
  $\smiley$, and a sink losing state $\frownie$. With each transition
  $t=(s, s', b, b', d)$ in the Turing machine and each position $p$ of
  the tape, we associate an action $a_{t,p}$ in $\nfa$, which
  simulates the effect of transition $t$ when the head position is
  $p$. Thus, on action $a_{t,p}$ there is a transition from the source
  state $s$ to the target state $s'$, another from the tape head
  position $p$ to its update according to $d$, and also from $(b,p)$
  to $(b',p)$. Moreover, from head position $q \neq p$, $a_{t,p}$
  leads to $\frownie$, so that in any population game, \playerone\
  only plays actions associated with the current head
  position. Similarly from states $(b'',p)$ with $b'' \neq b$, states
  $s'' \neq s$, action $a_{t,p}$ leads to $\frownie$.  Initially, an
  $\init$ action is available from $\state_0$ and leads to $s_0$, to
  position $0$ for the tape head, and to cells $(b,p)$ that encode the
  initial tape contents on input $x$. The NFA also has winning
  actions, that allow one to check that there are no agents in a
  subset of states, and send the remaining ones to the target
  $\smiley$. One such action should be played when agents encoding the
  state of the Turing machine lie in $s_f$, indicating that
  $\mathcal{M}$ accepted.  Another winning action $\winact$ is played
  whenever there are not enough agents to encode the initial
  configuration: \playertwo\ needs $m$ to be at least $P(n)+2$ to fill
  states corresponding to the initial tape contents ($P(n)$ tokens),
  the initial control state $s_0$ and the initial head position.  The
  sink losing state $\frownie$ is used to pinpoint an error in the
  simulation of $\mathcal{M}$.

\medskip

Now, in order to encode an alternating Turing machine, we assume that
the control states of $\mathcal{M}$ alternate between states of \playerone\
and states of \playertwo. The NFA $\nfa$ is extended with a
state $C$, for \playerone,  and an additional transition labelled
$\init$ from $q_0$ to $C$.  Assume first, that $C$ contains at
most an agent; we will later explain how to impose this.
Beyond $C$, the NFA also contains on state $t$ per transition 
of $\mathcal{M}$, which will represent that \playertwo\ chooses transition $t$.  
To do so, from state $C$, for any action $a_{t,p}$, there are
transitions to all states $t'$. 
From state $t$, actions of the form $a_{t,p}$ are allowed, leading
back to $C$. That is, actions $a_{t',p}$ with $t' \neq t$ lead from
$t$ to the sink losing state $\frownie$. This encodes that \playerone\
must follow the transition $t$ chosen by \playertwo. To punish
\playertwo\ in case the current tape contents is not the one expected by
the transition $t=(s, s', b, b', d)$ he chooses, there are trashing actions
$\trash_{s}$ and $\trash_{p,b}$ enabled from state $t$.  Action
$\trash_{s}$ leads from $t$ to $\smiley$, and also from $s$ to
$\frownie$.  In this way, the agents in $t$ cannot be used by
\playertwo, and \playerone\ wins more easily.  Similarly,
$\trash_{p,b}$ leads from $t$ to $\smiley$ and from any position state
$q \neq p$ to $\frownie$, and from $(b,p)$ to $\frownie$. 

Last, there are transitions on action $\ttend$ from state $\smiley$, $C$
and any of the $t$'s to the target state $\smiley$. Moreover, action
$\ttend$ from any other state (in particular the ones encoding the Turing
machine configuration) to $\frownie$.  This whole construction
encodes, assuming that there is a single agent in $C$ after the first
transition, that \playerone\ can choose the transition from a
\playerone\ state of $\mathcal{M}$, and \playertwo\ can choose
the transition from an \playertwo\ state.

\medskip

Let us now explain the gadget, represented below, to deal with the
case where \playertwo\ places several agents in state $C$ on the
initial action $\init$, enabling the possibility to later send agents
to several $t$'s simultaneously.
We use an extra 
state $\sstore$, actions $\store_t$ for each transition $t$, and action
$\restart$.
Action
$\store_{t}$ is a left loop on every state except from $t$, which goes
to $\sstore$. 
From all states except $\smiley$ and $\frownie$ action $\restart$
leads to $q_0$.  Last, the effects of $\winact$ and $\ttend$ are
modified as follow: $\winact$ leads from $C$ and from any $t$ to
$\smiley$, it loops on $s_f$ and moves from all other $s$'s to
$\frownie$; $\ttend$ goes from all $s$'s and $\smiley$ to $\smiley$,
and transitions from $q_0$, $C$, the $t$'s and $\sstore$ to
$\frownie$.

\vspace{-0.4cm}
\begin{figure}[h!]
{\center
\begin{center}
\includegraphics[scale=0.3]{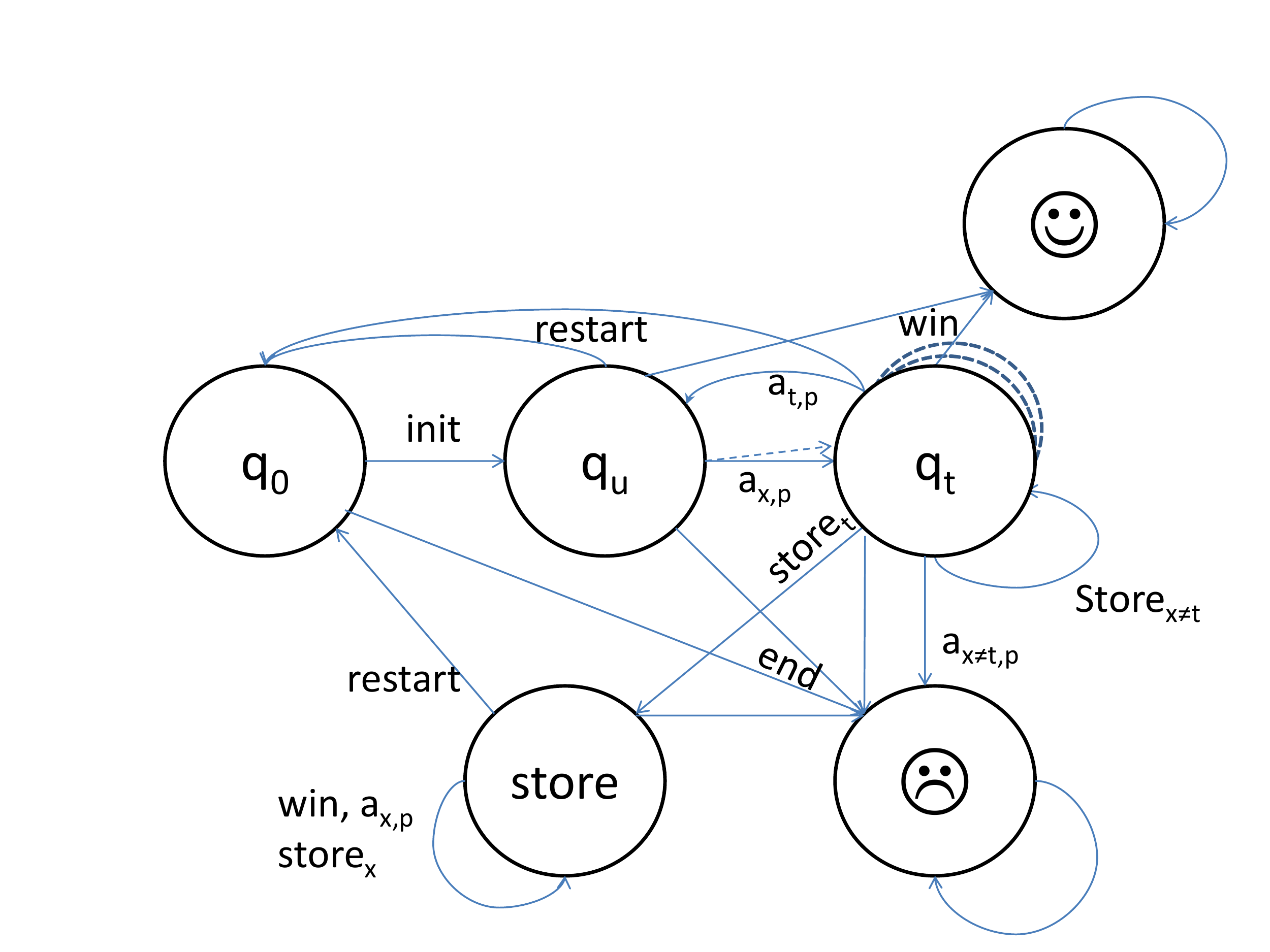}
\end{center}
}
\end{figure}

Assume that input $x$ is not accepted by the alternating Turing
machine $\mathcal{M}$, and let $m$ be at least $P(n)+3$.
In the $m$-population game, \playertwo\ has a winning strategy placing
initially a single agent in state $C$. If \playerone\ plays
$\store_{t}$ (for some $t$), either no agents are stored, or the
unique agent in $C$ is moved to $\sstore$. Thus \playerone\ cannot
play $\ttend$ and has no way to lead the agents encoding the Turing
machine configuration to $\smiley$, until he plays $\restart$, which
moves all the agents back to $q_0$. This shows that $\store_{t}$ is
useless to \playerone\, and thus \playertwo\ wins.

Conversely, if \playerone\ has a strategy in $\mathcal{M}$ witnessing
the acceptance of $x$, 
in order to win the $m$-population game, \playertwo\ would need to
cheat in the simulation of $\mathcal{M}$ and place at least two agents
in $C$ to eventually split them to ${t_1}, \ldots, {t_n}$. Then,
\playerone\ can play the corresponding actions
$\store_{t_2}, \ldots, \store_{t_n}$ moving all agents (but the ones
in ${t_1}$) in $\sstore$, after which he plays his winning strategy
from ${t_1}$ resulting in sending some agents to $\smiley$. Then,
\playerone\ plays $\restart$ and proceeds inductively with strictly
less agents from $q_0$, and eventually plays $\ttend$ to win.
\end{proof}

Surprisingly, the cut-off can be as high as doubly exponential in the size of the NFA.

\begin{proposition}
\label{prop:cutoff-lowerbound}
  There exists a family of NFA $(\nfa_n)_{n \in \nats}$ such that
  $|\nfa_n|=2n+7$, 
  and for $M = 2^{2^{n}+1}+n$, there is no winning strategy
  in $\nfa_n^{M}$ and there is one in $\nfa_n^{M-1}$.
\end{proposition}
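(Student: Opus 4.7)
The plan is to construct $\nfa_n$ as the synchronous product of two gadgets sharing a common alphabet: the splitting gadget $\splitnfa$ of Figure~\ref{fig:splitgadget} and an $n$-bit binary counter gadget $\countnfa{n}$. Each action drives both components simultaneously, and the target of $\nfa_n$ is the pair $(\targetstate,r)$ for any non-sink state $r$ of the counter, so that a winning play must synchronize $\splitnfa^m$ in $\targetstate$ while keeping every agent out of the counter's losing sink $\frownie$. Sizes add to $2n+7$ essentially because $\splitnfa$ is of constant size and $\countnfa{n}$ uses two states per bit plus a bounded number of control states.

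For the splitting gadget I would establish two quantitative bounds on $\splitnfa^m$: (s1) \playerone\ has a memoryless strategy synchronizing all agents in $\targetstate$ within $2\lceil\log_2 m\rceil+O(1)$ steps by alternating $\delta$ with whichever of $a,b$ eliminates the smaller sub-population; and (s2) no strategy of \playerone\ can synchronize in fewer than roughly $2\lfloor\log_2 m\rfloor$ steps. The lower bound relies on the fact that as long as $q_0$ contains at least two agents, \playertwo\ can respond to $\delta$ by splitting them as evenly as possible between $q_1$ and $q_2$, so that any subsequent $a$ or $b$ can clear at most half of them while sending the other half back through $q_0$; a pigeonhole on the number of required halvings gives the logarithmic lower bound.

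For the counter gadget I would design $\countnfa{n}$ with $n$ pairs of bit-states $(\ell_i,h_i)$, a sink $\frownie$, and actions implementing binary increment together with overflow detection. The construction would satisfy: (c1) if every bit-state is initially occupied by at least one agent, then after a precisely controlled number of steps (matching the exponent $2^n+1$) \playertwo\ can force an agent into $\frownie$ by always performing the expected increment and overflowing the top bit; and (c2) with strictly fewer than $n$ agents, some bit is permanently vacant and \playerone\ can reset the counter forever, so that $\frownie$ is avoided unconditionally. Putting these together: for $m\geq M$, \playertwo\ couples the even-splitting witness of (s2) with the counter-overflow strategy of (c1) and wins because the splitting lower bound on step count exceeds the counter's tolerance; for $m<M$, \playerone's symbolic winning strategy in $\splitnfa^m$ terminates within the counter's tolerance, and (c2) handles the corner case $m\leq n$, explaining the additive $+n$ in $M$.

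The main obstacle will be the precise calibration of constants so that the cut-off is \emph{exactly} $M=2^{2^n+1}+n$. This means sharpening (s1) and (s2) to coincide on an explicit step-count function of $m$, matching it to an explicit survival-bound of $\countnfa{n}$, and wiring the shared alphabet so that one product step costs exactly one step in each gadget. A secondary subtlety is that the product is played against a single adversary, so one must verify that \playertwo's two sub-strategies (even splitting in $\splitnfa$, systematic incrementing in $\countnfa{n}$) are jointly compatible with a single choice of transfer graphs on the product; this should follow from disjointness of the non-deterministic choices across components.
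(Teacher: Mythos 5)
Your overall plan --- couple the splitting gadget $\splitnfa$, which forces \playerone\ to spend $\Theta(\log m)$ steps, with an $n$-bit counter that kills him after roughly $2^n$ steps --- is exactly the paper's strategy, and your properties (s1), (s2), (c1), (c2) match the ones the paper isolates. The structural difference is that you take the \emph{synchronous product} $\splitnfa\times\countnfa{n}$ with target $\{(\targetstate,r)\mid r\neq\frownie\}$, whereas the paper takes the \emph{disjoint union} of the two gadgets, joined by a fresh initial state with two $\init$-transitions (one into each gadget), keeps pair-actions, and adds a single action $*$ that sends $\targetstate$ and all non-$\frownie$ counter states to a global target $\smiley$. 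This difference is not cosmetic for the proposition as stated, and it is where your proof would break.

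First, the size bound: the product of a $5$-state and a $(2n{+}2)$-state NFA has $\Theta(n)$ states but with constant roughly $10$, not the claimed $|\nfa_n|=2n+7$; only the disjoint union achieves $4+(2n+1)+2=2n+7$. Second, and more importantly, the exact value $M=2^{2^{n}+1}+n$ comes from a resource trade-off that simply does not exist in the product. In the disjoint union, \playertwo\ must \emph{spend} agents on the $\init$ move: he needs at least $n$ agents in the counter component to occupy every bit-state (otherwise \playerone\ resets forever and wins easily), and the remaining $M-n=2^{2^{n}+1}$ agents in the splitting component are exactly enough to force, via (s2), more than $2^n$ steps before $\targetstate$ is reached --- by which time (c2) has driven a counter agent into $\frownie$. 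With one agent fewer, one of the two budgets is short and \playerone\ wins. In your product, every agent lives in both components simultaneously, so there is no such allocation; the threshold would be governed by $\max$ of the two constraints rather than their sum, and the additive $+n$ would not appear. Your own explanation of the $+n$ (as covering ``the corner case $m\leq n$'' via (c2)) is therefore not the right mechanism. You correctly flag calibration as the main obstacle, but the fix is not sharper constants within the product --- it is replacing the product by the disjoint-union-with-shared-initial-action construction, which is precisely what makes the two gadget bounds add up to $M$.
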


{\begin{proof}
  Let $n\in \nats$. The NFA $\nfa_n$ we build is 
  the disjoin union of {\em two} NFAs with different properties, namely
 $\splitnfa,\countnfa{n}$.
 On the one hand, for $\splitnfa$, winning the game
  with $m$ agents requires $\Theta(\log m)$ steps. On the other
  hand, $\countnfa{n}$ implements a
  usual counter over $n$ bits
  (as used in many different publications),
  such that \playerone\ can avoid to lose during $O(2^n)$ steps. The combination of these two gadgets ensures a cut-off for $\nfa_n$ of $2^{2^n}$.

 Recall Figure~\ref{fig:splitgadget}, which presents the splitting gadget that has the following properties. 
In $\splitnfa^m$ with $m \in \nats$ agents,
$(s1)$ there is a winning strategy ensuring to win in $2 \left \lfloor \log_2 m \right \rfloor + 2$ steps;
$(s2)$ no strategy can ensure to win in less than $2 \left \lfloor \log_2 m \right \rfloor + 1$ steps.


{\blaise
The counting gadget that implements a counter with states $l_i$ (meaning bit $i$ is $0$) and $h_i$ (for bit $i$ is $1$) enjoys the following properties:}
$(c1)$ there is a strategy in
$\countnfa{n}$ to ensure avoiding $\frownie$ during $2^n$ steps,
by playing $\alpha_i$ whenever the counter suffix from bit $i$ is $0 1\cdots 1$;
$(c2)$ for $m \geq n$, no strategy of $\countnfa{n}^m$ avoid $\frownie$ for $2^n$ steps.

{\blaise
The two gadgets (splitting and counting) are combined
by a new initial state leading by two transitions labeled {\em init} to the initial states of both NFAs.
Actions consist of pairs of actions, one for each gadget: $\Sigma =
\{a,b,\delta\} \times \{\alpha_i \mid 1 \leq i \leq n\}$.}
We add an action $*$ which can be played from any state of $\countnfa{n}$ but $\frownie$, and only from $f$ in $\splitnfa$, 
leading to the global target state $\smiley$.

Let $M=2^{2^{n}+1}+n$. 
We deduce that the cut-off is $M-1$ as follows:
\begin{itemize}
\item 
For $M$ agents, a winning strategy for \playertwo\ is to
first split $n$ tokens from the initial state to the $q_0$ 
of $\countnfa{n}$, in order to fill each $l_i$ with 1 token, and
$2^{2^{n}+1}$ tokens to the $q_0$ of $\splitnfa$.
Then \playertwo\ splits evenly tokens between $q_1,q_2$ in $\splitnfa$.
In this way, \playerone\ needs at least $2^n+1$ steps
to reach the final state of $\splitnfa$ ($s2$), but \playerone\ reachs $\frownie$ after these $2^n+1$ steps in $\countnfa{n}$ ($c2$).

\item
For $M-1$ agents, \playertwo\ needs to use at least $n$ tokens 
from the initial state to the $q_0$ of $\countnfa{n}$, 
else \playerone\ can win easily. But then there are less than 
$2^{2^{n}+1}$ tokens in the $q_0$ of $\splitnfa$.
And thus by $(s1)$, \playerone\ can reach $f$ within $2^n$ steps,
after which he still avoids $\frownie$ in $\countnfa{n}$ ($c1$).
And then \playerone\ sends all agents to $\smiley$ using $*$.
\end{itemize}
Thus, the family $(\nfa_n)$ of NFA exhibits a doubly exponential cut-off.
\end{proof}
}

\section{Discussion}
\label{sec:conclu}
Obtaining an \EXPTIME algorithm for the control problem of a population of agents was challenging. We have a matching 
\EXPTIME-hard lower-bound. Further, the surprising doubly exponential matching upper and lower bounds on the cut-off imply that the alternative technique, checking that \playerone\ wins all $m$-population game for $m$ up to the cut-off, is far from being efficient.
This compares favourably  with the exponential gap for the parameterized verification of almost-sure reachability for a population communicating via a shared register~\cite{BMRSS-icalp16} (the latter problem is in \EXPSPACE and \PSPACE-hard). 

The idealised formalism we describe in this paper is not entirely satisfactory: for instance, while each agent can move in a non-deterministic way, unrealistic behaviours can happen, \emph{e.g.}\ all agents synchronously taking infinitely often the same choice.  An almost-sure control problem in a probabilistic formalism should be studied, ruling out such extreme behaviours.
As the population is discrete, we may avoid the undecidability that holds for distributions~\cite{DMS12} and is inherited from the equivalence with probabilistic automata~\cite{GO10}. Abstracting continuous distributions by a discrete population of arbitrary size could thus be seen as an approximation technique for undecidable formalisms such as probabilistic automata.

\medskip

{\bf Acknowledgement:} We are grateful to Gregory Batt for fruitful discussions concerning the biological setting. Thanks to Mahsa Shirmohammadi for interesting discussions. This work was partially supported by ANR projet STOCH-MC (ANR-13-BS02-0011-01).

\bibliographystyle{plainurl}
\bibliography{biblio}	
		
\end{document}